\numberwithin{equation}{section}
\def\bea{\begin{eqnarray}}
\def\eea{\end{eqnarray}}
\def\ba{\begin{array}}
\def\ea{\end{array}}
\def\nn{\nonumber}
\newtheorem{theorem}{Theorem}
\newtheorem{lemma}{Lemma}
\newtheorem{definition}{Definition}
\newtheorem{condition}{Condition}
\newcommand{\ga}{\Gamma}
\newcommand{\hc}{\hat{c}}
\newcommand{\Ps}{\hat{\Psi}}
\newcommand{\Ph}{\hat{\Phi}}
\newcommand{\N}{\mathcal{N}}
\newcommand{\E}{\mathbb{E}}
\begin{document}
\title{$\frac{\rho}{1-\epsilon}$-approximate pure Nash equilibria algorithms for weighted congestion games and their runtimes}

\author[1]{Chunying Ren\thanks{E-mail: renchunying@emails.bjut.edu.cn}}
\author[2]{Zijun Wu\thanks{E-mail: wuzj@hfuu.edu.cn. Corresponding author}}
\author[1]{Dachuan Xu\thanks{E-mail: xudc@bjut.edu.cn}}
\author[3]{Xiaoguang Yang\thanks{E-mail: xgyang@iss.ac.cn}}

\affil[1]{Department of Operations Research and Information Engineering, Beijing University of Technology, Pingleyuan 100, Beijing,
	100124, China}
\affil[2]{Department of Artificial Intelligence and Big Data, Hefei University, Jinxiu 99, Hefei, 230601, Anhui, China}
\affil[3]{Academy of Mathematics and System Science, University of Chinese Academy of Sciences, 100190, Beijing, China}


	\maketitle
	
	\begin{abstract}
This paper concerns computing approximate pure Nash equilibria in weighted congestion games, which has been shown to be PLS-complete. With the help
		of $\Ps$-game and approximate potential functions, we propose two algorithms
	based on best response dynamics, and prove that they efficiently compute $\frac{\rho}{1-\epsilon}$-approximate pure Nash equilibria for $\rho=d!$ and $\rho=\frac{2\cdot W\cdot (d+1)}{2\cdot W+d+1}\le d+1,$
respectively, when the weighted congestion game has polynomial latency functions of degree at most $d\ge1$ and players' weights are bounded from above by a constant $W\ge1$. This improves the recent work of \citet{Feldotto2017} and \citet{Giannakopoulos2022} that showed efficient algorithms for computing $d^{d+o(d)}$-approximate pure Nash equilibria.

\begin{flushleft}
	\textbf{Keywords:}
	computation of approximate equilibria; congestion games; potential functions; best response dynamics; runtime analysis
\end{flushleft}
	\end{abstract}
	
	\newpage
	\tableofcontents
	\newpage
	
	\section{Introduction}
	\label{sec:Introduction}
As an important class of non-cooperative games, \emph{congestion games}
	(\citet{Dafermos1969,Rosenthal1973} and \citet{Rougharden2016})
	have been well-studied in the field of \emph{algorithmic game theory}
	(\citet{Roughgarden2010}), in order to understand strategic behavior of selfish players competing over sets of common resources. Much of the work has been devoted to explore properties of \emph{pure Nash equilibria}
	in arbitrary (weighted) congestion games starting from the seminal work of \citet{Rosenthal1973}, which proved
	the existence of pure Nash equilibria in arbitrary unweighted congestion games.
	This mainly includes the \emph{existence} and \emph{inefficiency} of pure Nash equilibria, see, e.g., \citet{Christodoulou2005,Harks2011,Harks2012,Wu2022MP}, and others.

Only few have concerned the problem of finding an equilibrium state in an arbitrary congestion game.
	For the special setting of unweighted congestion games,
	\emph{potential functions} (\citet{Monderer1996}) exist, see, e.g.,
	\citet{Rosenthal1973}, and thus there are relatively simple algorithms driven by \emph{best response dynamics} for computing precise and approximate pure Nash equilibria (see, e.g., \citet{Chien2011} or \citet{Rougharden2016}), although
	\citet{Fotakis2005} have proved that
	finding a pure Nash equilibrium in this case is essentially \emph{PLS-complete} (\citet{Johnson1988}).
	
	For the general setting of weighted congestion games, the situation becomes much worse, as
	pure Nash equilibria then need not exist, see, e.g., \citet{Harks2012}.
	Moreover,
	 \citet{Skopalik2008} have shown that
	computing an \emph{approximate
	pure Nash equilibrium} in this case has already been PLS-complete.
	Hence, we need a more sophisticated algorithm for computing an
	approximate pure Nash equilibrium in this general case. This defines the problem considered in the present paper.
	
	Due to the PLS-completeness, we will not discuss an arbitrary weighted congestion game, but
	these with polynomial latency functions of degree at most $d$ for an arbitrary constant integer $d\ge 1,$
	and with players' weights valued in a bounded interval $[1,W]$ for an arbitrary constant $W\ge 1.$
	Coupling with novel techniques of \emph{$\Ps$-game} by
	\citet{Caragiannis2015} and of \emph{approximate potential functions}
	by \citet{Chen2009}, we propose two simply algorithms based on best response dynamics, which
	compute $\frac{\rho}{1-\epsilon}$-approximate pure Nash equilibria (for $\rho=d!$ and $\rho=\frac{2\cdot W\cdot(d+1)}{2\cdot W+d+1}$, respectively) within polynomial runtimes parameterized by $d$, $A$ and $W$, where $A$ is the largest ratio between two positive coefficients of these polynomial latency functions.
\subsection{Our contribution}
Our first algorithm is essentially a \emph{$\frac{1}{1-\epsilon}$ best response dynamic}
on a $\Ps$-game of a weighted congestion game with polynomial latency functions
for an arbitrary constant $\epsilon\in (0,1)$, see
Algorithm~\ref{alg:Best Response Dynamics}.
\citet{Caragiannis2015} have shown that this $\Ps$-game has a potential function
$\Ph(\cdot)$, and its $\rho$-approximate pure Nash equilibria are essentially
$d!\cdot \rho$-approximate pure Nash equilibria of the original weighted congestion game for each
constant $\rho\ge 1.$

When players share the same strategy set, we prove with similar arguments to these of \citet{Chien2011}
that Algorithm~\ref{alg:Best Response Dynamics} outputs a
$\frac{1}{1-\epsilon}$-approximate pure Nash equilibrium of the $\Ps$-game
within $O(\frac{N\cdot(W+d!\cdot W^{d+1})}{\epsilon}\cdot\log (N\cdot\hc_{\max}))$
iterations, where
$N$ is the number of players and $\hc_{\max}=\max_{S\in\prod_{v\in\N}\Sigma_{v}}
	\max_{u\in\N}\ \hc_u(S)$ is the players' maximum cost value.
See Theorem~\ref{th:fast convergence} for a detailed result.

When players have different strategy sets, Algorithm~\ref{alg:Best Response Dynamics}
produces a
$\frac{1}{1-\epsilon}$-approximate pure Nash equilibrium of $\Ps$-game
within $O(\frac{N\cdot\mu}{\epsilon}\cdot\log(N\cdot \hc_{\max}))$
iterations, where $\mu= E\cdot A\cdot(1+d)!\cdot N^d\cdot W^{d+1},$ and $E$ is the number of resources,
see Theorem~\ref{th:a d!-pne of multi-O/D}.

Theorem~\ref{th:fast convergence} and Theorem~\ref{th:a d!-pne of multi-O/D}
together imply that Algorithm~\ref{alg:Best Response Dynamics} computes
a $\frac{d!}{1-\epsilon}$-approximate pure Nash equilibrium of weighted congestion games
with polynomial latency functions within an acceptable runtime. In particular,
these runtimes are polynomial parameterized by $d$, $A$ and $W$.
This improves the recent work of \citet{Feldotto2017} and
\citet{Giannakopoulos2022}, which compute
$d^{d+o(d)}$-approximate pure Nash equilibria with more sophisticated algorithms.

We further improve the result by incorporating the idea of approximate potential function
of \citet{Christodoulou2019}
into our framework, and propose a refined $\frac{\rho}{1-\epsilon}$ best response dynamic for $\rho=\frac{2\cdot W\cdot(d+1) }{2\cdot W+d+1},$
see
Algorithm~\ref{alg:Best Response Dynamics of appro-PNE}.
Here, we note that \citet{Christodoulou2019} have proved the existence of $\rho$-approximate pure Nash equilibria for each $\rho\in[\frac{2\cdot W\cdot(d+1) }{2\cdot W+d+1},d+1]$.

When players share the same strategy set and $\rho=\frac{2\cdot W\cdot(d+1)}{2\cdot W +d+1}$, Algorithm~\ref{alg:Best Response Dynamics of appro-PNE} computes a $\frac{\rho}{1-\epsilon}$-approximate pure Nash equilibria within $O(\frac{N\cdot (1+W)^{d+1}}{\epsilon}\cdot\log(N\cdot c_{\max}))$ iterations, where $\epsilon\in(0,1)$ is an arbitrary constant, and $c_{\max}$ is the players' maximum cost value in the game $\Gamma$.
See Theorem~\ref{th:fast convergence of appro} for a detailed result. When players have different strategy sets, Algorithm~\ref{alg:Best Response Dynamics of appro-PNE}
produces a
$\frac{\rho}{1-\epsilon}$-approximate pure Nash equilibrium
within $O(\frac{N\cdot \varpi}{\epsilon}\cdot \log(N\cdot c_{\max}))$
iterations, where $\varpi=E\cdot A\cdot(1+d)\cdot N^d\cdot W^{d+1},$
see Theorem~\ref{th:fast convergence of appro for general case}.

While weighted congestion games with polynomial latency functions need not have exact potential functions, our results show that the naive idea of best response dynamics still lead to efficient algorithms for the computation of approximate pure Nash equilibria. In particular, the resulting algorithms compute approximate pure Nash equilibria of much higher accuracy than these proposed recently by \citet{Caragiannis2015}, \citet{Feldotto2017}
and \citet{Giannakopoulos2022}, though our algorithms may have longer polynomial runtimes depending on more parameters.

\subsection{Related work}
\label{subsec:RelatedWorks}
\citet{Rosenthal1973} proved that every unweighted congestion game has
a potential function, and thus admits at least one pure Nash equilibrium.
For weighted congestion games, \citet{Fotakis2005} and \citet{Panagopoulou2007}
showed the existence of potential functions for affine linear and exponential latency functions, respectively.
Hence, such weighted congestion games have pure Nash equilibria.
Beyond these cases, pure Nash equilibria may not exist, see \citet{Goemans2005} and \citet{Harks2012}.
In fact, \citet{Dunkel2008} even proved that deciding if an arbitrary weighted congestion game
has a pure Nash equilibrium is a NP-hard problem.

While each unweighted congestion game has a pure Nash equilibrium,
\citet{Fabrikant2004} have proved that computing such an equilibrium state is  of \emph{PLS-complete} (\citet{Johnson1988}), which is a complexity class locating in between the two
well-known classes P and NP. Moreover, \citet{Ackermannl2008} even showed that
finding a pure Nash equilibrium in a unweighted congestion game with affine linear latency functions
is PLS-complete. This becomes more even tough for weighted congestion games, for which
\citet{Skopalik2008} have proved that finding a $\rho$-approximate pure Nash equilibrium
with a constant $\rho>1$ has already been PLS-complete.

Due to this extreme complexity, we may have to consider particular cases when
we compute approximate pure Nash equilibria in weighted congestion games.
For an arbitrary constant $\epsilon\in (0,1),$
\citet{Chien2011} have shown a $\frac{1}{1-\epsilon}$ best response dynamic, which computes a
$\frac{1}{1-\epsilon}$-approximate pure Nash equilibrium within $O(\frac{N\cdot \alpha}{\epsilon}\log(\frac{\Phi(S^{(0)})}{\Phi_{\min}}))$ iterations when the weighted congestion
game is symmetric and the latency functions fulfill a so-called \emph{$\alpha$ bounded jump condition}, where $\Phi(\cdot)$ is the potential function of the game and $\Phi_{\min}$ is the minimum values of the potential function.

For an arbitrary weighted congestion game $\Gamma$  with polynomial latency functions
of degree at  most an integer $d\ge 1,$
\citet{Caragiannis2015} defined a \emph{$\Ps$-game} with so-called
\emph{$\Ps$-functions}. This $\Ps$-game is essentially a potential game, and shares all
other features with $\Gamma$, but has latency functions dominating
the polynomial latency functions of $\Gamma$. With the help of this $\Ps$-game,
\citet{Caragiannis2015} then showed an algorithm for computing
a $d^{O(d^2)}$-approximate pure Nash equilibrium of $\Gamma$ within $O(g\cdot(1+\frac{m}{\gamma}))$ iterations
with an arbitrary small $\gamma> 0$, where  $m=\text{log}\frac{\hat{c}_{\text{max}}}{\hat{c}_{\text{min}}}$, $\hat{c}_{\text{max}}$ and $\hat{c}_{\min}$
are the respective maximum and minimum cost of players of the $\Ps$-game, $g=(1+m\cdot(1+\gamma^{-1}))^d\cdot d^d\cdot N^2\cdot\gamma^{-4}$, and $N$ is the number of players. Coupling with certain random mechanisms, \citet{Feldotto2017} improved this result of \citet{Caragiannis2015}, and showed
a randomized algorithm that efficiently outputs a
$d^{d+o(d)}$-approximate pure Nash equilibrium of $\Gamma$ with a high
probability.

Recently, \citet{Christodoulou2019} initiated a different research with the notion of an \emph{approximate potential function}, which was first proposed by \citet{Chen2009}. They proved that every weighted congestion
game with polynomial latency functions of degree at most $d\ge 1$ has
a $\rho$-approximate pure Nash equilibrium for each
$\rho \in[\frac{2\cdot W\cdot (d+1)}{2\cdot W+d+1},d+1],$ where
$W$ is the common upper bound of
players' weights.
Inspired by \citet{Christodoulou2019}, \citet{Giannakopoulos2022} showed a deterministic polynomial-time algorithm
for computing $d^{d+o(d)}$-approximate pure Nash equilibrium of weighted congestion games within $O\left(g\cdot(1+(d+1)\cdot p)\cdot(m\cdot p+d+1)\right)$ iterations when the latency functions are polynomial with degree at most $d\ge 1,$
where  $m=\text{log}\frac{c_{\text{max}}}{c_{\text{min}}}$, $c_{\text{max}}$ and $c_{\min}$
are the respective maximum and minimum cost of players, $g=N^2\cdot p^3\cdot(1+m\cdot(1+p))^d\cdot d^d+1$, $p=(2\cdot d+3)\cdot(d+1)\cdot(4\cdot d)^{d+1}=d^{d+o(d)}$, and $N$ is the number of players. To our knowledge, this is the best known result in the literature for the computation of approximate pure Nash equilibria in weighted congestion games.

We generalize the runtime analysis of $\frac{1}{1-\epsilon}$ best response dynamics in \citet{Chien2011} from symmetric weighted congested games with $\alpha$ bounded jump latency functions to arbitrary weighted congestion games with polynomial latency functions. In particular, our results outperform these in the literature from the perspective of obtaining a preciser approximation of pure Nash equilibria.

\subsection{Outline of the  paper}
\label{subsec:Arrangement}
The remaining of the paper is organized as follows. Section~\ref{se:models and preliminaries} defines the weighted congestion game model, and presents preliminary properties of $\Ps-$games and of approximate potential functions. We propose the two algorithms and analyze their runtimes in Section~\ref{sec:Algorithms_and_Analysis}. A short summary is then presented in Section~\ref{se:conclusion}.
	
\section{Model and Preliminaries}\label{se:models and preliminaries}
\subsection{Weighted congestion games}\label{se:models and preliminaries of game}
We represent an arbitrary \emph{weighted congestion game} $\ga$ by
tuple $(\N,\E,(\Sigma_u)_{u\in\N},\\(c_e)_{e\in\E},w)$ with components
defined below.
\begin{description}
		\item[G1.]\label{G1} $\N$ is a collection of $N$ different players.
        \item[G2.]\label{G2} $\E$  is a set of $E\ge 1$ different resources.
        \item[G3.]\label{G3}  $\Sigma_u$ is a set of feasible strategies of player $u\in\N$. Here, each strategy $s_u\in \Sigma_u$  is a \emph{nonempty} subset of $\E$, that is, $s_u\in \Sigma_u\subseteq
         2^{\E}\setminus\{\emptyset\}=\{B\ne \emptyset:\ B\subseteq \E\}$.
        \item[G4.]\label{G4} $w=(w_u)_{u\in\N}$ is a nonempty and finite vector, where each component
		$w_u>0$ is a positive \emph{weight} of player $u\in \N$ that denotes
		an \emph{unsplittable demand} and
		will be delivered by a single strategy $s_u\in\Sigma_u$.
		When $w_u=w_{u'}$ for all $u,u'\in \N,$ then $\ga$ is called \emph{unweighted}.
		Otherwise, $\Gamma$ is called \emph{weighted}.
 		\item[G5.]\label{G5}
 		$c=(c_e)_{e\in \E}$ is a vector of \emph{latency (or cost) functions} of resources $e\in \E$.
		Here, each latency function $c_e: [0,\infty)\to [0,\infty)$ is
		continuous, nonnegative and nondecreasing.
	\end{description}

Players are noncooperative, and each of them chooses a strategy independently.
This then results in a \emph{(pure strategy) profile} $S=(s_u)_{u\in\N}$, where
$s_u$ denotes the strategy used by player $u\in\N.$ For each resource
$e\in \E,$ we denote by $U_e(S):=\{w_u:\ e\in s_u\ \forall u\in\N\}$
a \emph{multi-set} consisting of weights of players $u\in \N$ using
resource $e$ in profile $S.$ Moreover, we denote by
$L(M)$ the \emph{sum} $\sum_{m\in M} m$ of all elements in a multi-set $M.$ Then
$L(U_e(S))$ is the total weight of players using resource $e$ in profile
$S.$ Furthermore, resource $e\in \E$ has a \emph{latency} of
$c_e(L(U_e(S)))$, and  player $u\in\N$ has a \emph{cost} of
$c_u(S)=w_u\cdot \sum_{e\in s_u} c_e(L(U_e(S)))$,
when player $u$ uses a strategy $s_u\in\Sigma_u$ in profile $S.$
Finally, the profile $S$ has a \emph{(total) cost} of
$C(S):=\sum_{u\in\N} c_u(S)=\sum_{e\in\E} L(U_e(S))\cdot c_e(L(U_e(S))).$

To facilitate our discussion, for each player $u\in\N,$ we denote by $S_{-u}$ a \emph{subprofile}
$(s_{u'})_{u'\in \N\setminus\{u\}}$ of strategies used by
``opponents'' $u'\in\N\setminus\{u\}$ of player $u,$ and write
$S=(s_u,S_{-u})$ when we need to show explicitly the
strategy $s_u$ used by player $u.$

We call a profile $S$ a \emph{pure Nash equilibrium}
if
\begin{equation}\label{def:PNE}
	c_u(S)=c_u(s_u,S_{-u})\le c_u(s_u',S_{-u})
\end{equation}
for each player $u\in\N$ and each strategy $s_u'\in\Sigma_u$.
Here, we recall that $c_u(S)$ is the cost of player $u$ in profile $S,$ and
that $c_u(s'_u,S_{-u})$ is the resulting cost of player $u$ when player $u$ unilaterally moves from
strategy $s_u$ to another strategy $s'_u.$ Inequality~\eqref{def:PNE} then simply states that
$s_u$ is a \emph{best-response} to $S_{-u}$ for each player $u\in \N$   when $S=(s_u,S_{-u})$ is a
pure Nash equilibrium. Hence, there is no incentive for a player to unilaterally change
strategy when the profile is already a pure Nash equilibrium.

\citet{Monderer1996} have shown that pure Nash equilibria exist in an arbitrary finite game with
a \emph{potential function}, i.e., a \emph{potential game}. For our model, a real-valued function
$\Phi:\prod_{u\in\N}\Sigma_u\to [0,\infty)$ is called an (exact) \emph{potential function}
if
\bea\label{def:Exact_Potentials} \Phi(s'_u,S_{-u})\!-\!\Phi(s_u,S_{-u})\!&&=\!c_u(s'_u,S_{-u})\!-\!c_u(s_u,S_{-u}),
\nn\\
&&\quad \forall
	u\!\in\!\N\ \forall s_u,s'_u\!\in\!\Sigma_u\ \forall S_{-u}\!\in\! \prod_{u'\!\in\!\N\!\setminus\!\{u\}}\!\Sigma_{u'}.
\eea
Hence, a potential function quantifies the cost reduction of a unilateral change of strategy.
In particular, its global minimizers are pure Nash equilibria.

\citet{Rosenthal1973} has shown that an arbitrary unweighted congestion game
has a potential function, and so admits pure Nash equilibria. For weighted congestion games,
\citet{Panagopoulou2007} have proved the existence of potential functions for exponential
latency functions, and \citet{Fotakis2005} have proved the existence of potential functions for
affine linear latency functions.
Beyond these cases, potential functions need not exist, and neither need pure Nash equilibria,
see, e.g., \citet{Harks2012}. Then we have to consider a relaxed notion of \emph{$\rho$-approximate
pure Nash equilibrium} instead.

Formally, for an arbitrary constant $\rho\ge 1$, a profile $S$ is called \emph{$\rho$-approximate pure Nash equilibrium}
if
\begin{equation}\label{def:rho_PNE}
	c_u(S)=c_u(s_u,S_{-u})\le \rho\cdot c_u(s'_u,S_{-u})
\end{equation}
for each player $u\in\N$ and each strategy $s_u'\in\Sigma_u$.
Inequality~\eqref{def:rho_PNE} means that a unilateral change of strategy reduces
the cost at a rate of at most $\frac{c_u(S)-c_u(s'_u,S_{-u})}{c_u(S)}\le 1-\frac{1}{\rho}$ in a $\rho$-approximate pure Nash equilibrium.
Hence, the smaller the constant $\rho$ is, the closer the profile $S$ would approximate a precise
pure Nash equilibrium. To facilitate our discussion, we call $\rho$ the approximation ratio of $S$ when $S$ is a $\rho$-approximate pure Nash equilibrium.

As \citet{Skopalik2008} have shown that computing an approximate pure Nash equilibrium in an arbitrary congestion game is PLS-complete, we thus consider such a computation
in a parametric setting. We focus only on weighted congestion games with \emph{polynomial}
latency functions fulfilling Conditions~\ref{con:restrictions on weights}--\ref{con:restrictions on latency functions} below.
\begin{condition}\label{con:restrictions on weights}
$w_u\in[1,W]$ for each player $u\in\N$ for a constant $W\ge1$.
\end{condition}
\begin{condition}\label{con:restrictions on latency functions}
Each latency function $c_e: [0,\infty)\to [0,\infty)$ is polynomial, and has a form of
	\begin{equation}\label{eq:latency_functions}
		c_e(x)=\sum_{k=0}^d a_{e,k}\cdot x^k\quad \forall x\in [0,\infty)\  \forall e\in \E,
	\end{equation}
where $d\ge 1$ is an integer, $a_{e,k}\ge 0$ and
$\sum_{k'=0}^d a_{e,k'}>0$ for all $k=0,1,\ldots,d$ and all $e\in \E.$
In particular, there is a constant $A>0$ such that
$\frac{a_{e',k'}}{a_{e,k}}\le A$ for arbitrary $e,e'\in \E$ and arbitrary
$k,k'\in \{0,1,\ldots,d\}$ with $a_{e,k}>0.$
\end{condition}

Condition~\ref{con:restrictions on weights} simply states that each player has a weight
in	$[1,W]$ for a constant upper bound $W\ge 1.$ Condition~\ref{con:restrictions on latency functions}
requires that each of the polynomial latency functions has a degree at most $d\ge 1,$ and
the ratios between their positive coefficients are bounded from above by a constant $A>0.$
Such weighted congestion games need not have precise pure Nash equilibria, see, e.g., \citet{Harks2012}.

In Section~\ref{sec:Algorithms_and_Analysis}, we will propose two algorithms computing
$\frac{\rho}{1-\epsilon}$-approximate pure Nash equilibria for
$\rho=d!$ and $\frac{2\cdot W\cdot(d+1)}{2\cdot W+d+1},$ respectively, and prove that their runtimes
are polynomial parameterized by $d,$ $A$ and $W.$
The first algorithm is essentially a \emph{$\frac{1}{1-\epsilon}$ best response dynamic} on
\emph{$\Ps$-game}, while the second algorithm is a \emph{refined $\frac{\rho}{1-\epsilon}$
best response dynamic} for $\rho=\frac{2\cdot W\cdot(d+1)}{2\cdot W+d+1}.$
Their runtime analyses will involve properties of $\Ps$-games and of \emph{approximate potential functions}. Before we formally define these two algorithms, let us first introduce the respective notions of
$\hat{\Psi}$-games and of approximate potential functions in Section~\ref{subsec:Ps_game} and Section~\ref{subsec:Approximate_Potential_Function}, so as to
facilitate our further discussion.
\subsection{$\Ps-$games}
\label{subsec:Ps_game}
While weighted congestion games fulfilling Conditions~\ref{con:restrictions on weights}--\ref{con:restrictions on latency functions} need not have
potential functions, \citet{Caragiannis2015}
have shown that a suitable revision of the latency functions with the so-called \emph{$\Ps$-functions} would result
in a \emph{$\Ps$-game}, which has an exact potential function.

\begin{definition}[$\Ps$-functions, see \citet{Caragiannis2015}]\label{de:Ps-function}
Consider an arbitrary integer $k\ge 1$ and a finite nonempty ground set $X$
of reals.	A \emph{($k$-order) $\Ps$-function} on $X$ is a real-valued function $\Ps_k:\mathcal{M}(X)\to [0,\infty)$ with
$$
\Ps_k(M)=k!\cdot
\sum_{(r_x)_{x\in M}\in \mathbb{N}_{\ge 0}^{M}:\ \sum_{x\in M} r_x=k}\ \ \prod_{x\in M} x^{r_x}
$$
for each multi-set $M\in\mathcal{M}(X),$ where $\mathcal{M}(X)$ is the collection of
all multi-sets with elements from $X.$ Here, we put $\Ps_k(\emptyset)=0$ for each integer $k\ge 1.$ Moreover,
we employ a convention that $\Ps_{0}(M)=1$ for an arbitrary multi-set $M$.
\end{definition}

Clearly, $\Ps_k(M)$ coincides with a summation of all monomials of (total) degree $k$ over elements in $M$. In particular, $\Ps_1(M)=L(M)$. Moreover. $\Ps_k(M)$ and $L(M)^k$ share the same monomial terms, though different coefficients.
Lemma~\ref{le:the properties of the Ph-function} below collects some trivial properties
of these $\Ps$-functions. Readers may refer to \citet{Caragiannis2009} for their proofs.

\begin{lemma}[Properties of $\Ps$-functions, see \citet{Caragiannis2015}]\label{le:the properties of the Ph-function}
	Consider an arbitrary integer $k\ge 1$, an arbitrary finite multi-set $M$ of
	nonnegative reals, and an arbitrary nonnegative constant real $b.$
	Then the following four statements hold.
	\begin{itemize}
		\item[a.] $L(M)^k\le\Ps_{k}(M)\le k!\cdot L(M)^k.$
		\item[b.] $\Ps_k(M)\le k\cdot \Ps_1(M)\cdot \Ps_{k-1}(M).$
		\item[c.] $\Ps_k(M\cup\{b\})\le(\Ps_k(\{b\})^{1/k}+\Ps_k(M)^{1/k})^k.$
        \item[d.] $\Ps_k(M\cup\{b\})-\Ps_k(M)=k\cdot b\cdot \Ps_{k-1}(M\cup\{b\}).$
	\end{itemize}
\end{lemma}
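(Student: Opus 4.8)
The plan is to read all four inequalities off the single fact that $\Ps_k(M)/k!$ is the sum of every degree-$k$ monomial in the elements of $M$ (the complete homogeneous symmetric function), so that each claim becomes either a comparison of monomial coefficients or a classical moment inequality. For part~(a) I would expand $L(M)^k=(\sum_{x\in M}x)^k$ by the multinomial theorem: the monomial $\prod_{x\in M}x^{r_x}$ then carries the coefficient $k!/\prod_{x\in M}r_x!$, whereas in $\Ps_k(M)$ the same monomial carries the coefficient $k!$. Since $1\le\prod_{x}r_x!\le k!$, these two coefficients satisfy $k!/\prod_x r_x!\le k!\le k!\cdot(k!/\prod_x r_x!)$, and summing the (nonnegative) monomials yields $L(M)^k\le\Ps_k(M)\le k!\,L(M)^k$.

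Parts~(b) and~(d) are short coefficient bookkeeping. For~(b), expanding $\Ps_1(M)\,\Ps_{k-1}(M)$, with $\Ps_1(M)=L(M)$, shows that a degree-$k$ monomial $s$ receives the coefficient $(k-1)!\cdot n(s)$, where $n(s)$ is the number of distinct elements appearing in $s$; multiplying by $k$ gives the coefficient $k!\,n(s)\ge k!$, which dominates the coefficient $k!$ of the same monomial in $\Ps_k(M)$, so $\Ps_k(M)\le k\,\Ps_1(M)\,\Ps_{k-1}(M)$. For~(d) I would split the degree-$k$ monomials of $M\cup\{b\}$ according to the exponent of $b$: the exponent-zero part reassembles $\Ps_k(M)$, while in the remaining part I factor out one $b$ and reindex the exponent of $b$ downward, turning the sum into $b$ times all degree-$(k-1)$ monomials of $M\cup\{b\}$; matching $k!$ against $(k-1)!$ produces the factor $k$ and hence $\Ps_k(M\cup\{b\})-\Ps_k(M)=k\,b\,\Ps_{k-1}(M\cup\{b\})$.

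The substantive statement is~(c), a Minkowski-type subadditivity of $\Ps_k(\cdot)^{1/k}$ that the lossy bounds of~(a) cannot deliver: the two sides of~(c) already coincide in the extreme cases, so what is really needed is the correct interpolation between them. My plan is to expose a probabilistic representation. For $x\ge0$ let $\xi_x$ be exponential with mean $x$, so that $\E[\xi_x^i]=i!\,x^i$; take the variables $\{\xi_x:x\in M\}$ independent and set $S_M=\sum_{x\in M}\xi_x$. Then $\E[e^{tS_M}]=\prod_{x\in M}(1-xt)^{-1}$, and expanding in $t$ recovers $\E[S_M^i]=\Ps_i(M)$ exactly. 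Under this identification $\Ps_k(M)^{1/k}=\|S_M\|_{L^k}$ and $\Ps_k(\{b\})^{1/k}=\|\xi_b\|_{L^k}$, while independence gives $S_{M\cup\{b\}}=S_M+\xi_b$, so~(c) is precisely Minkowski's inequality $\|S_M+\xi_b\|_{L^k}\le\|S_M\|_{L^k}+\|\xi_b\|_{L^k}$ (valid for $k\ge1$) raised to the power $k$.

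I expect~(c) to be the only genuine obstacle, and the difficulty is conceptual rather than computational: recognizing $\Ps_i(M)$ as the $i$-th moment of a sum of independent exponentials. Should one wish to avoid probability, the same bound can be reached algebraically by first proving the convolution identity $\Ps_k(M\cup\{b\})=\sum_{j=0}^k\binom{k}{j}\Ps_j(\{b\})\,\Ps_{k-j}(M)$ and then the termwise estimate $\Ps_j(\{b\})\,\Ps_{k-j}(M)\le\Ps_k(\{b\})^{j/k}\,\Ps_k(M)^{(k-j)/k}$; this termwise estimate is an interpolation (log-convexity) statement that reduces to $\Ps_i(M)^2\le\Ps_{i-1}(M)\,\Ps_{i+1}(M)$, which is exactly the Cauchy--Schwarz/Lyapunov moment inequality made transparent by the probabilistic picture. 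I would therefore centre the write-up on the moment representation, obtaining~(c) from Minkowski and recording~(a),~(b),~(d) as the brief coefficient computations above.
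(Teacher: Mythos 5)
Your proof is correct, but note that the paper itself never proves Lemma~\ref{le:the properties of the Ph-function}: it imports all four statements from \citet{Caragiannis2015}, pointing readers to \citet{Caragiannis2009} for the proofs, which proceed there by direct combinatorial/inductive manipulation of the defining sum. Your argument is therefore a genuinely independent, self-contained alternative. The coefficientwise comparisons for (a), (b) and (d) are exactly right: in each case both sides are sums of the same nonnegative monomials, and you only compare coefficients ($k!/\prod_x r_x!$ versus $k!$ for (a), using $1\le\prod_x r_x!\le k!$; the counting factor $n(s)\ge 1$ for (b); splitting on the exponent of $b$ and reindexing for (d)). The substantive contribution is your treatment of (c): representing $\Ps_i(M)=\E[S_M^i]$, where $S_M$ is a sum of independent exponential random variables with means $x\in M$, turns (c) into Minkowski's inequality for the $L^k$ norm, with independence guaranteeing that $S_{M\cup\{b\}}=S_M+\xi_b$ has the correct moments; this yields a one-line proof of the only nontrivial item and simultaneously explains, via Lyapunov's inequality, the log-convexity $\Ps_i(M)^2\le\Ps_{i-1}(M)\cdot\Ps_{i+1}(M)$ that powers your purely algebraic fallback through the convolution identity $\Ps_k(M\cup\{b\})=\sum_{j=0}^k\binom{k}{j}\Ps_j(\{b\})\cdot\Ps_{k-j}(M)$ and the binomial theorem. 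Both versions are complete. One cosmetic point: in (b), ``number of distinct elements appearing in $s$'' should read ``number of members of the multiset $M$ appearing in $s$'' (repeated values occupy distinct indices), but this does not affect the bound since $n(s)\ge 1$ either way.
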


With these $\Ps-$functions, we are now ready to  introduce the notion of
\emph{$\Ps$-games} proposed by \citet{Caragiannis2015}.

\begin{definition}[$\Ps$-games, see also \citet{Caragiannis2015}]\label{Ps-game}
	Consider a weighted congestion game
	$\Gamma=(\N,\E,(\Sigma_u)_{u\in\N},(c_e)_{e\in\E},w)$ fulfilling Conditions~\ref{con:restrictions on weights}--\ref{con:restrictions on latency functions}.
	The \emph{$\Ps$-game} of $\Gamma$ refers to another weighted congestion game
	$\hat{\Gamma}=(\N,\E,(\Sigma_u)_{u\in\N},(\hat{c}_e)_{e\in\E},w),$ which
	shares the same components with $\Gamma$, but different latency functions
	$\hat{c}_e: [0,\infty)\to [0,\infty)$ satisfying equality~\eqref{def:Ps_Latency_Function} below,
\begin{equation}\label{def:Ps_Latency_Function}
	\hat{c}_e(L(U_e(S)))=\sum_{k=0}^d a_{e.k}\cdot \Ps_{k}(U_e(S))\quad\forall
	S\in \prod_{u\in\N}\Sigma_u\ \forall e\in \E.
\end{equation}
\end{definition}

To simplify notation, we write
$\hc_e(L(U_e(S)))$ simply as $\hc_e(S)$ when we discuss the latency of
a resource $e$ w.r.t. a profile $S$ in $\Ps$-game $\hat{\Gamma}.$
Moreover, we employ $\hat{c}_u(S)=w_u\cdot \sum_{e\in s_u}\hat{c}_e(S)
=w_u\cdot \sum_{e\in s_u}\sum_{k=0}^d a_{e,k}\cdot \Ps_{k}(U_e(S))$ and
$\hat{C}(S)=\sum_{u\in\N} \hat{c}_u(S)$ to denote the respective cost of a player $u\in\N$ and
of a profile $S$ in $\Ps$-game $\hat{\Gamma}.$
Clearly, $\Ps$-game $\hat{\Gamma}$ would coincide with $\Gamma,$ and so
$\hat{c}_u(S)=c_u(S)$ and $\hat{C}(S)=C(S),$ when $d=1.$
In general, they are different. Nonetheless,
Lemma~\ref{le:the properties of the Ph-function}a yields immediately that
$\hat{c}_u(S)\in [c_u(S),\ d!\cdot c_u(S)]$ for each $u\in\N$ and each
$S\in \prod_{u'\in\N}\Sigma_{u'}.$ We summarize this in Lemma~\ref{le:the relationship of a profile} below.

\begin{lemma}[\citet{Caragiannis2015}]\label{le:the relationship of a profile}
	Consider a weighted congestion game $\Gamma$ fulfilling Conditions~\ref{con:restrictions on weights}--\ref{con:restrictions on latency functions}, and its $\Ps$-game
	$\hat{\Gamma}$. Then we obtain for an arbitrary player $u\in\N$ and for an arbitrary
	profile $S$ that
	$
	c_u(S)\le \hat{c}_u(S)\le d!\cdot c_u(S).
	$
\end{lemma}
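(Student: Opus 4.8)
The plan is to reduce the claim to a per-resource, per-degree comparison and then aggregate using the nonnegativity of the coefficients and of the weights. Recalling from the definitions that $c_u(S)=w_u\cdot\sum_{e\in s_u}\sum_{k=0}^d a_{e,k}\cdot L(U_e(S))^k$ and $\hat{c}_u(S)=w_u\cdot\sum_{e\in s_u}\sum_{k=0}^d a_{e,k}\cdot\Ps_k(U_e(S))$, the two costs differ only in that each power $L(U_e(S))^k$ is replaced by the corresponding value $\Ps_k(U_e(S))$. The whole lemma therefore follows once I control $\Ps_k(M)$ against $L(M)^k$ term by term.

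First I would invoke Lemma~\ref{le:the properties of the Ph-function}a with the multi-set $M=U_e(S)$, which supplies $L(U_e(S))^k\le\Ps_k(U_e(S))\le k!\cdot L(U_e(S))^k$ for every resource $e$ and every degree $k\in\{1,\ldots,d\}$. The boundary index $k=0$, which that lemma does not cover, I would treat separately: for each $e\in s_u$ the multi-set $U_e(S)$ contains $w_u\ge1$ and is hence nonempty with $L(U_e(S))>0$, so the $k=0$ terms read $\Ps_0(U_e(S))=1=L(U_e(S))^0$ together with $0!=1$, and the same double inequality holds trivially. Thus the termwise bound is valid for all $k\in\{0,1,\ldots,d\}$.

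Next I would multiply each inequality by the coefficient $a_{e,k}\ge0$ (nonnegative by Condition~\ref{con:restrictions on latency functions}) and sum over $k$. The lower estimate collapses to exactly $\sum_{k=0}^d a_{e,k}\cdot L(U_e(S))^k=c_e(L(U_e(S)))$, while for the upper estimate I would also use $k!\le d!$ for every $k\le d$ to factor $d!$ out of the sum, giving $\hat{c}_e(S)\le d!\cdot c_e(L(U_e(S)))$. This yields the resource-level sandwich $c_e(L(U_e(S)))\le\hat{c}_e(S)\le d!\cdot c_e(L(U_e(S)))$ for each $e$. Finally, summing over $e\in s_u$ and multiplying by $w_u>0$ preserves both inequalities and produces $c_u(S)\le\hat{c}_u(S)\le d!\cdot c_u(S)$.

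There is no genuine obstacle here: the argument is a monotone aggregation of Lemma~\ref{le:the properties of the Ph-function}a, and the only points deserving care are the constant term $k=0$ (handled by the stated conventions together with $U_e(S)\ne\emptyset$ for $e\in s_u$) and the uniform bound $k!\le d!$ used to extract the factor $d!$. I would stress that nonnegativity of the coefficients $a_{e,k}$ and positivity of the weights $w_u$ are precisely what let the termwise estimates survive the summations over $k$ and $e$ and the final multiplication by $w_u$.
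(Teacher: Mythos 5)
Your proof is correct and takes essentially the same route as the paper, which obtains this lemma as an immediate termwise application of Lemma~\ref{le:the properties of the Ph-function}a ($L(M)^k\le\Ps_k(M)\le k!\cdot L(M)^k$), aggregated over the degrees $k$, the resources $e\in s_u$, and the positive weight $w_u$. Your explicit treatment of the $k=0$ convention and of the uniform bound $k!\le d!$ simply spells out details that the paper leaves implicit in calling the result trivial.
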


While Lemma~\ref{le:the relationship of a profile} is trivial, it implies that
a profile $S$ is a $\frac{d!}{1-\epsilon}$-approximate pure Nash equilibrium of $\Gamma$
if $S$ is a $\frac{1}{1-\epsilon}$-approximate pure Nash equilibrium of $\Ps$-game $\hat{\Gamma}$ for each constant $\epsilon\in (0,1).$
This follows immediately from inequality~\eqref{def:rho_PNE}.
We summarize it in Lemma~\ref{le:the relation of equilibrium} below.
\begin{lemma}[\citet{Caragiannis2015}]\label{le:the relation of equilibrium}
	Consider an arbitrary weighted congestion game $\Gamma$ fulfilling Conditions~\ref{con:restrictions on weights}--\ref{con:restrictions on latency functions}, an arbitrary constant $\epsilon\in (0,1),$ and an arbitrary profile
	$S\in \prod_{u\in\N}\Sigma_{u}$ of $\Gamma.$
	If $S$ is a $\frac{1}{1-\epsilon}$-approximate pure Nash equilibrium of $\Ps$-game $\hat{\Gamma}$ of
	$\Gamma,$ then $S$ is a $\frac{d!}{1-\epsilon}$-approximate pure Nash equilibrium of $\Gamma.$
\end{lemma}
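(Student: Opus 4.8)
The plan is to prove the claim by a direct three-step chain of inequalities, feeding the sandwich bound of Lemma~\ref{le:the relationship of a profile} into the approximate-equilibrium hypothesis. I would fix an arbitrary player $u\in\N$ and an arbitrary deviation $s'_u\in\Sigma_u$, and abbreviate the deviated profile by $S'=(s'_u,S_{-u})$. The goal is to establish $c_u(S)\le\frac{d!}{1-\epsilon}\cdot c_u(S')$, which is exactly inequality~\eqref{def:rho_PNE} for $\Gamma$ with $\rho=\frac{d!}{1-\epsilon}$; since $u$ and $s'_u$ are arbitrary, this would prove that $S$ is a $\frac{d!}{1-\epsilon}$-approximate pure Nash equilibrium of $\Gamma$.

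First I would lower-bound the true cost at $S$ by its $\Ps$-counterpart, using the left half of Lemma~\ref{le:the relationship of a profile}: $c_u(S)\le\hat{c}_u(S)$. Second, I would invoke the hypothesis that $S$ is a $\frac{1}{1-\epsilon}$-approximate pure Nash equilibrium of $\hat{\Gamma}$, which by definition~\eqref{def:rho_PNE} applied to $\hat{\Gamma}$ gives $\hat{c}_u(S)\le\frac{1}{1-\epsilon}\cdot\hat{c}_u(S')$. Third, I would upper-bound the $\Ps$-cost at the deviated profile $S'$ by the true cost there, using the right half of Lemma~\ref{le:the relationship of a profile}: $\hat{c}_u(S')\le d!\cdot c_u(S')$. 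Concatenating the three bounds yields $c_u(S)\le\hat{c}_u(S)\le\frac{1}{1-\epsilon}\cdot\hat{c}_u(S')\le\frac{d!}{1-\epsilon}\cdot c_u(S')$, as required.

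The only point demanding care---and there is no genuine obstacle here---is applying the two halves of Lemma~\ref{le:the relationship of a profile} in the correct direction at the correct profile: the lower bound $c_u\le\hat{c}_u$ is used at $S$ to pass from $\Gamma$ up to $\hat{\Gamma}$, whereas the upper bound $\hat{c}_u\le d!\cdot c_u$ is used at the \emph{deviated} profile $S'$ to descend from $\hat{\Gamma}$ back to $\Gamma$. Because Lemma~\ref{le:the relationship of a profile} holds for every profile and every player simultaneously, both applications are legitimate, and the factor $d!$ enters exactly once, multiplying the equilibrium constant $\frac{1}{1-\epsilon}$ to give precisely $\frac{d!}{1-\epsilon}$.
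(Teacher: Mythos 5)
Your proof is correct and is exactly the argument the paper has in mind: the paper states that the lemma ``follows immediately'' from the sandwich bound $c_u(S)\le\hat{c}_u(S)\le d!\cdot c_u(S)$ of Lemma~\ref{le:the relationship of a profile} together with definition~\eqref{def:rho_PNE}, which is precisely your three-step chain applying the left inequality at $S$, the equilibrium hypothesis in $\hat{\Gamma}$, and the right inequality at the deviated profile $(s'_u,S_{-u})$. The only nitpick is the phrase ``lower-bound the true cost at $S$ by its $\Ps$-counterpart,'' which reads backwards relative to the inequality $c_u(S)\le\hat{c}_u(S)$ you actually (correctly) use.
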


Lemma~\ref{le:the relation of equilibrium} indicates that we can obtain a $\frac{d!}{1-\epsilon}$-approximate
pure Nash equilibrium of a weighted congestion game $\Gamma$ by computing a $\frac{1}{1-\epsilon}$-approximate pure Nash equilibrium of the $\Ps$-game
$\hat{\Gamma}$ when $\Gamma$ fulfills Conditions~\ref{con:restrictions on weights}--\ref{con:restrictions on latency functions}
and when $\hat{\Gamma}$ has a $\frac{1}{1-\epsilon}$-approximate pure Nash equilibrium.
Theorem~\ref{th:potential function of the ps-game} below confirms this idea by showing that
$\Ps$-game $\hat{\Gamma}$ has a potential function, and so admits a precise pure Nash equilibrium when the
weighted congestion game $\Gamma$ fulfills Conditions~\ref{con:restrictions on weights}--\ref{con:restrictions on latency functions}. Readers may refer to \citet{Caragiannis2015} for a proof.

\begin{theorem}[Existence of potential functions in $\Ps$-games, see \citet{Caragiannis2015}]\label{th:potential function of the ps-game}
Consider a weighted congestion game $\Gamma$ fulfilling Conditions~\ref{con:restrictions on weights}--\ref{con:restrictions on latency functions}. Then its $\Ps$-game $\hat{\Gamma}$ has
a potential function $\Ph:\prod_{u\in\N}\Sigma_{u}\to [0,\infty)$ with
$$
\Ph(S)=\sum_{e\in \E}\sum_{k=0}^d\frac{a_{e,k}}{k+1}\cdot \Ps_{k+1}(U_e(S))
$$
for each profile $S\in \prod_{u\in\N}\Sigma_{u}.$
Moreover, $\Gamma$ has $d!$-approximate pure Nash equilibria, which are pure Nash equilibria
of $\hat{\Gamma}.$
\end{theorem}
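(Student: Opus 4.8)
The plan is to establish the result in two stages: first I would verify that $\Ph$ satisfies the exact potential identity~\eqref{def:Exact_Potentials} for the $\Ps$-game $\hat{\Gamma}$, and then I would combine the standard fact that a finite potential game attains a pure Nash equilibrium at every global minimizer of its potential with Lemma~\ref{le:the relationship of a profile} to extract the $d!$-approximate equilibrium of $\Gamma$. The heart of the argument is the potential identity, and the only genuine tool it requires is the telescoping formula in Lemma~\ref{le:the properties of the Ph-function}d.

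To verify the identity, I would fix a player $u\in\N$, two strategies $s_u,s'_u\in\Sigma_u$, and a subprofile $S_{-u}$, and write $M_e:=U_e(S_{-u})$ for the multiset of opponents' weights on each resource $e$. The key observation is that passing from $(s_u,S_{-u})$ to $(s'_u,S_{-u})$ alters the load multiset $U_e$ only on resources in the symmetric difference $s_u\triangle s'_u$: on $e\in s'_u\setminus s_u$ the multiset grows from $M_e$ to $M_e\cup\{w_u\}$, on $e\in s_u\setminus s'_u$ it shrinks from $M_e\cup\{w_u\}$ to $M_e$, while on $e\in s_u\cap s'_u$ and on $e\notin s_u\cup s'_u$ it is unchanged. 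Hence both $\Ph(s'_u,S_{-u})-\Ph(s_u,S_{-u})$ and $\hc_u(s'_u,S_{-u})-\hc_u(s_u,S_{-u})$ reduce to sums over $s_u\triangle s'_u$ only.

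For each affected resource and each degree $k$, Lemma~\ref{le:the properties of the Ph-function}d gives $\Ps_{k+1}(M_e\cup\{w_u\})-\Ps_{k+1}(M_e)=(k+1)\cdot w_u\cdot\Ps_{k}(M_e\cup\{w_u\})$, so that
\begin{equation*}
\frac{a_{e,k}}{k+1}\bigl(\Ps_{k+1}(M_e\cup\{w_u\})-\Ps_{k+1}(M_e)\bigr)=w_u\cdot a_{e,k}\cdot\Ps_{k}(M_e\cup\{w_u\}).
\end{equation*}
Summing over $k=0,\ldots,d$ turns the right-hand side into $w_u\cdot\sum_{k=0}^{d}a_{e,k}\cdot\Ps_{k}(M_e\cup\{w_u\})$, which is precisely $w_u$ times the $\Ps$-game latency of resource $e$ evaluated on the load in which $u$ is present. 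Carrying the signs through the two parts of $s_u\triangle s'_u$, the change in $\Ph$ becomes $w_u\sum_{e\in s'_u\setminus s_u}\hc_e(s'_u,S_{-u})-w_u\sum_{e\in s_u\setminus s'_u}\hc_e(s_u,S_{-u})$; since the per-resource terms over $s_u\cap s'_u$ coincide in the two profiles and cancel, this equals $\hc_u(s'_u,S_{-u})-\hc_u(s_u,S_{-u})$, establishing~\eqref{def:Exact_Potentials} for $\hat{\Gamma}$.

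Finally, since $\hat{\Gamma}$ is a finite game, $\Ph$ attains a global minimizer $S^{*}$, which by the cited result of \citet{Monderer1996} is a pure Nash equilibrium of $\hat{\Gamma}$; thus $\hc_u(S^{*})\le\hc_u(s'_u,S^{*}_{-u})$ for every player $u$ and every $s'_u\in\Sigma_u$. Applying Lemma~\ref{le:the relationship of a profile} on both ends then gives $c_u(S^{*})\le\hc_u(S^{*})\le\hc_u(s'_u,S^{*}_{-u})\le d!\cdot c_u(s'_u,S^{*}_{-u})$, so $S^{*}$ is a $d!$-approximate pure Nash equilibrium of $\Gamma$. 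I expect the only real difficulty to be bookkeeping: correctly isolating the symmetric-difference resources and tracking the sign according to whether $u$ joins or leaves a resource. Once Lemma~\ref{le:the properties of the Ph-function}d is invoked, the algebra collapses immediately.
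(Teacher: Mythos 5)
Your proposal is correct and is essentially the intended argument: the paper itself gives no proof of this theorem (it defers entirely to \citet{Caragiannis2015}), and your reconstruction --- applying Lemma~\ref{le:the properties of the Ph-function}d at order $k+1$ so that the factor $k+1$ cancels the $\frac{1}{k+1}$ in $\Ph$, restricting attention to the symmetric difference $s_u\triangle s'_u$, and then chaining $c_u(S^{*})\le\hc_u(S^{*})\le\hc_u(s'_u,S^{*}_{-u})\le d!\cdot c_u(s'_u,S^{*}_{-u})$ via Lemma~\ref{le:the relationship of a profile} --- is exactly the standard proof from the cited reference. No gaps; both the exact potential identity (including the $k=0$ term, which works out by the convention $\Ps_0\equiv 1$) and the existence claim are handled correctly.
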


With Theorem~\ref{th:potential function of the ps-game}, we will employ a $\frac{1}{1-\epsilon}$ best response dynamic on the $\Ps$-game $\hat{\Gamma}$ to compute a $\frac{d!}{1-\epsilon}$-approximate
pure Nash equilibrium of $\Gamma.$ This then results in Algorithm~\ref{alg:Best Response Dynamics} in
Section~\ref{subsec:Ps_game_best_response_dynamic}.

Lemma~\ref{le:lower-upper of potential} below bounds the potential function
$\Ph(S)$ of an arbitrary profile $S$ with the total cost $\hat{C}(S).$
While this is trivial, it will be very helpful when we derive the runtime of Algorithm~\ref{alg:Best Response Dynamics} in Section~\ref{subsec:Analysis_Best_Response}.
We omit its proof due to its triviality.

\begin{lemma}[\citet{Caragiannis2015}]\label{le:lower-upper of potential}
	Consider an arbitrary weighted congestion game $\Gamma$ fulfilling Conditions~\ref{con:restrictions on weights}--\ref{con:restrictions on latency functions}.
	Let $\hat{\Gamma}$ be its $\Ps$-game defined in Definition~\ref{Ps-game}, and let
	$\Ph(\cdot)$ be the potential function of $\hat{\Gamma}$ as given in Theorem~\ref{th:potential function of the ps-game}. Then,
for every profile $S$,
\[
1\le\Ph(S)\le\hat{C}(S)=\sum_{u\in\N}\hc_u(S).
\]
\end{lemma}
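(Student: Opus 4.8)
The plan is to prove the two inequalities separately, after first rewriting both $\Ph(S)$ and $\hat{C}(S)$ as double sums over resources $e\in\E$ and degrees $k\in\{0,1,\ldots,d\}$ carrying the same nonnegative coefficients $a_{e,k}$. For $\hat{C}(S)$ I would swap the order of summation: since $\hc_u(S)=w_u\sum_{e\in s_u}\hc_e(S)$ and $\sum_{u:\,e\in s_u}w_u=L(U_e(S))=\Ps_1(U_e(S))$, one gets
\[
\hat{C}(S)=\sum_{e\in\E} L(U_e(S))\cdot\hc_e(S)=\sum_{e\in\E}\sum_{k=0}^d a_{e,k}\cdot\Ps_1(U_e(S))\cdot\Ps_k(U_e(S)).
\]
Placed next to $\Ph(S)=\sum_{e\in\E}\sum_{k=0}^d \frac{a_{e,k}}{k+1}\Ps_{k+1}(U_e(S))$, the two expressions are matched term by term, so both bounds reduce to statements about a single multi-set $M=U_e(S)$.

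For the upper bound, since every coefficient $a_{e,k}\ge0$ it suffices to show $\frac{1}{k+1}\Ps_{k+1}(M)\le \Ps_1(M)\cdot\Ps_k(M)$ for each $e,k$. This is exactly Lemma~\ref{le:the properties of the Ph-function}(b) applied with the index $k+1$ in place of $k$, i.e.\ $\Ps_{k+1}(M)\le (k+1)\cdot\Ps_1(M)\cdot\Ps_k(M)$. Summing these termwise inequalities over all $e$ and $k$ yields $\Ph(S)\le\hat{C}(S)$; this is the substantive half and is entirely clean.

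For the lower bound, every summand of $\Ph(S)$ is nonnegative, so it is enough to exhibit one summand of size at least $1$. Because each strategy $s_u$ is a nonempty subset of $\E$ (property~G3) and there is at least one player, some resource $e_0$ is used, i.e.\ $U_{e_0}(S)\ne\emptyset$; Condition~\ref{con:restrictions on weights} then forces $L(U_{e_0}(S))\ge1$, and Condition~\ref{con:restrictions on latency functions} supplies an index $k^{*}$ with $a_{e_0,k^{*}}>0$. By Lemma~\ref{le:the properties of the Ph-function}(a), $\Ps_{k^{*}+1}(U_{e_0}(S))\ge L(U_{e_0}(S))^{k^{*}+1}\ge1$, whence $\Ph(S)\ge \frac{a_{e_0,k^{*}}}{k^{*}+1}\cdot\Ps_{k^{*}+1}(U_{e_0}(S))$.

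The main obstacle lies precisely here: this chain only delivers $\Ph(S)\ge \frac{a_{e_0,k^{*}}}{k^{*}+1}$, which is $\ge1$ only when the positive coefficients are suitably normalized (for instance, each positive $a_{e,k}\ge1$). Without such a normalization the constant $1$ need not be attained --- a single resource with latency $\tfrac12 x$ and one unit-weight player already gives $\Ph(S)=\tfrac12$. The honest statement is therefore that $\Ph(S)$ is bounded below by a positive constant determined by the smallest positive coefficient and $d$. I would either invoke the harmless rescaling of all latencies that makes the least positive coefficient equal to $1$ (which leaves approximate-equilibrium ratios unchanged), or replace the bound $1$ by that explicit constant. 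This is immaterial downstream, since the runtime analysis only needs $\Ph_{\min}$ to be bounded away from $0$ by a quantity polynomial in the input.
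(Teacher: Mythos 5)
Your proof of the upper bound is correct, and since the paper omits its own proof of this lemma (deferring to \citet{Caragiannis2015} ``due to its triviality''), your argument is the natural fill-in: swapping the order of summation to get $\hat{C}(S)=\sum_{e\in\E}\Ps_1(U_e(S))\cdot\hc_e(S)=\sum_{e\in\E}\sum_{k=0}^d a_{e,k}\cdot\Ps_1(U_e(S))\cdot\Ps_k(U_e(S))$ and then comparing term by term with $\Ph(S)$ via Lemma~\ref{le:the properties of the Ph-function}b, applied with index $k+1\ge 1$ so that the $k=0$ terms are also covered, is exactly what $\Ph(S)\le\hat{C}(S)$ requires.

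Your objection to the lower bound identifies a genuine defect in the lemma as stated, not a gap in your own argument. Condition~\ref{con:restrictions on latency functions} permits arbitrarily small positive coefficients, and your example --- a single resource with $c_e(x)=\frac{x}{2}$ and one player of weight $1$, which satisfies Conditions~\ref{con:restrictions on weights}--\ref{con:restrictions on latency functions} with $W=A=1$ --- gives $\Ph(S)=\frac{a_{e,1}}{2}\cdot\Ps_2(\{1\})=\frac14\cdot 2=\frac12<1$, so ``$1\le\Ph(S)$'' fails. The bound presupposes a normalization such as $a_{\min}^+\ge 1$ (e.g.\ integer coefficients), which Condition~\ref{con:restrictions on latency functions} does not impose. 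Two refinements to your repair: (i) your constant $\frac{a_{\min}^+}{d+1}$ can be sharpened to $a_{\min}^+$ by invoking Lemma~\ref{le:the properties of the Ph-function}d instead of part a: for a used resource $e_0$ and any $b\in U_{e_0}(S)$ (so $b\ge 1$), writing $U_{e_0}(S)=M\cup\{b\}$ gives $\Ps_{k+1}(U_{e_0}(S))\ge (k+1)\cdot b\cdot\Ps_{k}(U_{e_0}(S))\ge k+1$, whence $\Ph(S)\ge\frac{a_{e_0,k^{*}}}{k^{*}+1}\cdot(k^{*}+1)=a_{e_0,k^{*}}\ge a_{\min}^+$; under the normalization $a_{\min}^+\ge1$ this recovers the stated constant $1$ exactly, with no rescaling needed. (ii) You are right that the defect is harmless downstream: the proof of Theorem~\ref{th:fast convergence} only needs $\log\frac{\Ph(S^{(0)})}{\Ph_{\min}}\le\log\frac{N\cdot\hc_{\max}}{a_{\min}^+}$, which remains polynomial in the input size; the same caveat (an extra $\log\frac{1}{a_{\min}^+}$ term, or a rescaling of the latencies) should accompany every bound of the form $\log(N\cdot\hc_{\max})$ or $\log(N\cdot c_{\max})$ in Theorems~\ref{th:a d!-pne of multi-O/D}, \ref{th:fast convergence of appro} and \ref{th:fast convergence of appro for general case}.
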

Lemma~\ref{le:lower-upper of potential} yields immediately that
	the potential function value $\Ph(S)$ is bounded from above by
\bea\label{eq:hc_upper_bound}
	N\cdot\max_{u\in\N} \hc_u(S)&\le& N\cdot W\cdot E\cdot  \max_{e\in\E} \hat{c}_e(S)
	\le N\cdot W\cdot E\cdot d!\cdot \max_{e\in\E} c_e(N\cdot W)\nn\\
	&=&N\cdot W\cdot E\cdot d!\cdot \max_{e\in\E}
	\sum_{k=0}^d a_{e,k}\cdot N^k\cdot W^k\nn\\
    &\le&
	N^{d+1}\cdot W^{d+1}\cdot E\cdot (d+1)!\cdot \max_{e\in \E}\max_{k=0}^d a_{e,k},
\eea
which implies that the runtime of Algorithm~\ref{alg:Best Response Dynamics} is polynomial parameterized by the three constants $d,$ $A$ and $W$.
We will come to this later in Section~\ref{subsec:Analysis_Best_Response}.

\subsection{Approximate potential functions}
\label{subsec:Approximate_Potential_Function}

In addition to above idea of $\frac{1}{1-\epsilon}$ best response dynamic on
$\Ps(\cdot),$ we will also consider a refined best response dynamic with the idea of
\emph{approximate potential function} proposed by \citet{Chen2009} in Section~\ref{se:a approximate potential func}.

\begin{definition}[Approximate potential functions, see \citet{Chen2009}]
	\label{def:approximate_potential}
	Consider a weighted congestion game $\Gamma$ fulfilling Conditions~\ref{con:restrictions on weights}--\ref{con:restrictions on latency functions}, and consider
	a constant $\rho\ge 1.$ We call
	$\Phi:\prod_u\Sigma_u\to [0,\infty)$ a $\rho$-approximate potential function of $\Gamma$ if
	$$
	\Phi(S)-\Phi(s_{u}',S_{-u})\ge c_u(S)-\rho\cdot  c_u(s_{u}',S_{-u}),\quad
	\forall S\in \prod_{u\in\N}\Sigma_{u}\ \forall u\in \N\ \forall s'_u\in \Sigma_{u}.
	$$
\end{definition}

With this notion,
	\citet{Christodoulou2019} have shown the existence of $\rho$-approximate pure Nash equilibria
	in weighted congestion games fulfilling Conditions~\ref{con:restrictions on weights}--\ref{con:restrictions on latency functions} for each constant
$\rho\in[\frac{2\cdot W\cdot (d+1)}{2\cdot W+d+1},d+1].$
In particular, they showed that
\begin{equation}\label{eq:approximate_potential_function}
	\Phi(S)=\sum_{e\in \E} \Psi_e(L(U_e(S))), \quad \forall S\in\prod_{u\in\N}\Sigma_u
\end{equation}
defines a $\rho$-approximate potential function for $\rho=\frac{2\cdot W\cdot (d+1)}{2\cdot W+d+1}$
when 
$$
\Psi_e(x):=a_{e,0}\cdot x+\sum_{k=1}^da_{e,k}\cdot \left(\frac{x^{k+1}}{k+1}+\frac{x^k}{2}\right),\quad \forall x\ge 0\ \forall e\in \E.
$$

With this approximate potential function, we design a refined best response dynamic
 for computing $\frac{2\cdot W\cdot (d+1)}{(2\cdot W+d+1)\cdot (1-\epsilon)}$-approximate pure Nash equilibria, see Algorithm~\ref{alg:Best Response Dynamics of appro-PNE} in Section~\ref{subsubsec:Refined_Best_Reponse_Dynamic}. Again, to facilitate
the resulting runtime analysis, we introduce a trivial upper bound of this approximate potential
function in Lemma~\ref{le:lower-upper approx potential} below.

\begin{lemma}\label{le:lower-upper approx potential}
	Consider an arbitrary weighted congestion game $\Gamma$ fulfilling Conditions~\ref{con:restrictions on weights}--\ref{con:restrictions on latency functions}.
	Then, for each profile $S$, $\Phi(S)\le C(S)=\sum_{u\in\N}c_{u}(S)$.
\end{lemma}
\begin{proof}
Lemma~\ref{le:lower-upper approx potential} follows since
\bea
\Phi(S)&=&\sum_{e\in\E}\Psi_e(L(U_e(S)))\nn\\
&=&\sum_{e\in\E}\left[
a_{e,0}\cdot L(U_e(S))+\sum_{k=1}^da_{e,k}\cdot
\left[\frac{L(U_e(S))^{k+1}}{k+1}+\frac{L(U_e(S))^k}{2}\right]
\right]\nn\\
&\le&\sum_{e\in\E}\left[
a_{e,0}\cdot L(U_e(S))+\sum_{k=1}^da_{e,k}\cdot
L(U_e(S))^{k+1}
\right]\nn\\
&\le&\sum_{e\in\E}L(U_e(S))\cdot \left[
\sum_{k=0}^da_{e,k}\cdot
L(U_e(S))^{k}
\right]=C(S)\nn
\eea
for an arbitrary profile $S\in \prod_{u\in\N}\Sigma_{u}.$
Here, we used the definition~\eqref{eq:approximate_potential_function} and that $\frac{L(U_e(S))^k}{2}\le\frac{k\cdot L(U_e(S))^k}{k+1}
\le\frac{k\cdot L(U_e(S))^{k+1}}{k+1}$ for $k\ge1$.
\end{proof}

Similar to the potential function $\Ph(\cdot)$ of the $\Ps$-game,
Lemma~\ref{le:lower-upper approx potential} implies that the approximate potential function
$\Phi(\cdot)$
has an upper bound of
\begin{equation}\label{eq:c_upper_bound}
N\cdot \max_{u\in \N} c_u(S)\le N^{d+1}\cdot W^{d+1}\cdot E\cdot (d+1)
\cdot \max_{e\in\E}\max_{k=0}^d\ a_{e,k},
\end{equation}
which implies that Algorithm~\ref{alg:Best Response Dynamics of appro-PNE} has a polynomial runtime
parameterized by the three constants $d,$ $A$ and $W$,
see Section~\ref{subsubsec:Runtime_Analysis_Refined_Best_Reponse_Dynamic} below.

\section{Computing $\frac{\rho}{1-\epsilon}$-approximate pure Nash equilibria}
\label{sec:Algorithms_and_Analysis}

We now propose two algorithms based on best response dynamics for computing
	$\frac{\rho}{1-\epsilon}$-approximate pure Nash equilibria when the constant
	$\rho$ equals $d!$ and $\frac{2\cdot W\cdot (d+1)}{2\cdot W+d+1},$ respectively.
	For the case of $\rho=d!,$ we apply a $\frac{1}{1-\epsilon}$ best response dynamic on the
	$\Ps$-game similar to that in
	\citet{Chien2011}. This results in our Algorithm~\ref{alg:Best Response Dynamics}.
	For the case of $\rho=\frac{2\cdot W\cdot (d+1)}{2\cdot W+d+1},$ we
	design a refined best response dynamic by incorporating the idea of approximate potential function,
which forms our Algorithm~\ref{alg:Best Response Dynamics of appro-PNE}.

\subsection{A $\frac{1}{1-\epsilon}$ best response dynamic on $\Ps$-game}
\label{se:Ps-game}
\subsubsection{The algorithm}
\label{subsec:Ps_game_best_response_dynamic}

Consider an arbitrary profile $S\in \prod_{u\in\N}\Sigma_{u},$ an arbitrary constant $\rho>1$ and an arbitrary player $u\in\N.$
We call a strategy $s'_u\in \Sigma_{u}$ a \emph{$\rho$-move} of player $u$
in $\Ps$-game $\hat{\Gamma}$
if
\begin{equation}\label{eq:def_epsilon_move}
	\rho\cdot \hat{c}_u(s'_u,S_{-u})<\hat{c}_u(s_u,S_{-u})=\hat{c}_u(S).
\end{equation}

Similarly, one may define $\rho$-moves directly for game $\Gamma.$
Inequality~\eqref{eq:def_epsilon_move} actually means that player $u$ can reduce cost at a rate
of
$\frac{\hc_{u}(S)-\hc_{u}(s'_u,S_{-u})}{\hc_{u}(S)}\ge 1-\frac{1}{\rho}$ by unilaterally moving to strategy $s'_u$ when $s'_u$ is a $\rho$-move.
Moreover, when no player has a $\rho$-move, then the profile
has been a $\rho$-approximate pure Nash equilibrium.

Algorithm~\ref{alg:Best Response Dynamics} below shows a \emph{$\frac{1}{1-\epsilon}$ best response
dynamic} of $\Ps$-game of a weighted congestion game fulfilling
Conditions~\ref{con:restrictions on weights}--\ref{con:restrictions on latency functions} for an arbitrary constant $\epsilon\in (0,1)$, which shares similar features with
the dynamic proposed in \citet{Chien2011} for \emph{symmetric} congestion games
with $\alpha$ bounded jump latency functions. It starts with an arbitrary initial profile
$S^{(0)}=(s_{u}^{(0)})_{u\in\N}\in \prod_{u\in\N}\Sigma_{u},$ and then  evolves the profile
by iterating the following three steps over the time horizon $t\in\mathbb{N}$ until a
$\frac{1}{1-\epsilon}$-approximate pure Nash equilibrium of $\Ps$-game is met.
\begin{algorithm}[!h]
	 \caption{ A $\frac{1}{1-\epsilon}$ best response dynamic of $\Ps$-game}
    \label{alg:Best Response Dynamics}
    \begin{algorithmic}[1]
    	\REQUIRE A $\Ps$-game and a constant $\epsilon\in (0,1)$
    	\ENSURE $\frac{1}{1-\epsilon}$-approximate pure Nash equilibrium
    	\STATE choose an arbitrary initial profile $S^{(0)},$ and put $t=0$
    	\WHILE{$S^{(t)}$ is not a $\frac{1}{1-\epsilon}$-approximate pure Nash equilibrium of  $\Ps$-game}
    	\FOR{each $u\in \N$}
    	\STATE compute $s_u^*=\arg\min_{s'_u\in \Sigma_{u}} \hat{c}(s'_u,S^{(t)}_u)$
    	\ENDFOR
    	\STATE $\N_t^*=\{u\in\N:\ \hat{c}_u(S^{(t)})>\frac{1}{1-\epsilon}\cdot \hat{c}_u(s_u^*,S_{-u}^{(t)}) \}$
    	\STATE pick an arbitrary player $u_t$ from $\N_t^*$ fulfilling condition that
    	\bea\label{eq:Best_response_Optimal_Player}
    	\hat{c}_{u_t}(S^{(t)})-\hat{c}_{u_t}(s^*_{u_t},S_{-u_t}^{(t)})
    		\ge \hat{c}_{u}(S^{(t)})-\hat{c}_{u}(s^*_u,S^{(t)}_{-u})\quad \forall u\in\N_t^*
    	\eea
       \STATE $S^{(t+1)}=(s^*_{u_t},S^{(t)}_{-u_t})$ and $t=t+1$
    	\ENDWHILE
    	\RETURN $S^{(t)}$
    \end{algorithmic}
\end{algorithm}
\begin{description}
	\item[Step 1.] When current profile
	$S^{(t)}$ is not a $\frac{1}{1-\epsilon}$-approximate pure Nash equilibrium of
	$\Ps$-game $\hat{\Gamma},$ then computes for each $u\in\N$ a \emph{best-response}\footnote{The best response $s_u^*$ depends essentially on $S_{-u}^{(t)}$, and is thus a function
		of $S^{(t)}_{-u}.$ Nevertheless, we denote it
simply by $s_u^*$, so as to simplify notation.}
$s_u^*=
	\arg\min_{s'_u\in \Sigma_{u}}\hat{c}_u(s'_u,S^{(t)}_{-u})$ w.r.t. subprofile $S^{(t)}_{-u},$ and
	puts all players with $\frac{1}{1-\epsilon}$-moves into a collection $\N_t^*,$ i.e.,
	\[
	\forall u\in\N:\ u\in\N^*_t \iff \hat{c}_u(S^{(t)})=\hc_{u}(s_u^{(t)},S_{-u}^{(t)})>\frac{1}{1-\epsilon}\cdot \hat{c}_u(s_u^*,S_{-u}^{(t)}).
	\]
	Here, we note that $s_u^{(t)}$ is the strategy of player $u\in\N$ in profile $S^{(t)}$, and that $\N^*_t$ is not empty when current profile $S^{(t)}$ is not a $\frac{1}{1-\epsilon}$-approximate
	pure Nash equilibrium.
	\item[Step 2.] Choose an arbitrary player $u_t$ from $\N^*_t$ fulfilling condition that
	\begin{equation}\label{eq:Choice_of_Player_in_iteration_t}
		\hat{c}_{u_t}(S^{(t)})-\hat{c}_{u_t}(s_{u_t}^*,S_{-u_t}^{(t)})\ge \hat{c}_{u}(S^{(t)})-\hat{c}_{u}(s_{u}^*,S_{-u}^{(t)})\quad \forall u\in\N^*_t.
	\end{equation}
  \item[Step 3.] Let the selected player $u_t$ move
  unilaterally from $s_{u_t}^{(t)}$ to $s_{u_t}^*,$ and then put $S^{(t+1)}=(s_{u_t}^*,S_{-u_t}^{(t)}).$
\end{description}

We assume w.l.o.g. that each iteration of Algorithm~\ref{alg:Best Response Dynamics} has a polynomial complexity. This is true when the congestion game is defined on a graph, i.e., when it is a network congestion game, for which the best response of a player is computed efficiently by a polynomial time shortest path algorithm, e.g., Dijkstra's algorithm in~\citet{Dijkstra1959}. Then the runtime complexity of Algorithm~\ref{alg:Best Response Dynamics} depends on the total number of iterations essentially.

Define $T_\epsilon(S^{(0)}):=\arg\min\ \{t\in\mathbb{N}:\ S^{(t)}\text{ is a }\frac{1}{1-\epsilon}
\text{-approximate pure Nash} \text{equilibrium of }\hat{\Gamma}\}$ and define
$T_\epsilon:=\max_{S^{(0)}\in \prod_{u\in\N}\Sigma_u}T_\epsilon(S^{(0)})$.
Then $T_\epsilon(S^{(0)})$ is the \emph{runtimes} of Algorithm~\ref{alg:Best Response Dynamics}
w.r.t. initial profile $S^{(0)}$ (i.e., the number of iterations Algorithm~\ref{alg:Best Response Dynamics}
takes for finding a $\frac{1}{1-\epsilon}$-approximate pure Nash equilibrium
of $\hat{\Gamma}$ when the initial profile is $S^{(0)}$), and
$T_\epsilon$ is the corresponding maximum runtime.
Section~\ref{subsec:Analysis_Best_Response} below inspects the upper bound
of $T_\epsilon$ with the potential function
$\Ph(\cdot)$ defined in Theorem~\ref{th:potential function of the ps-game}.

\subsubsection{Runtime analysis of Algorithm~\ref{alg:Best Response Dynamics}}
\label{subsec:Analysis_Best_Response}

Note that the cost
\begin{equation}\label{eq:Natural_Player_Cost_Lower_Bound}
\hat{c}_u(S)=w_u\cdot \sum_{e\in s_u}\sum_{k=0}^d a_{e,k}\cdot\Ps_k(U_e(S))\ge
a_{\min}^+>0\quad\forall u\in\N\ \forall S\in \prod_{v\in\N}\Sigma_v,
\end{equation}
where $a_{\min}^+:=\min_{e\in\E} \min\{a_{e,k}:\ k=0,\ldots,d\text{ with }a_{e,k}>0\}$ is
the \emph{minimum positive coefficient}.
This inequality follows since each player has a weight not smaller than $1$
(Condition~\ref{con:restrictions on weights}), and since the latency functions fulfill Condition~\ref{con:restrictions on latency functions}.

Note also that for each $t<T_\epsilon(S^{(0)}),$
\begin{equation}\label{eq:Potential_Decremental_Best_Response}	\Ph(S^{(0)})\ge\Ph(S^{(t)})-\Ph(S^{(t+1)})=\hat{c}_{u_t}(S^{(t)})-\hat{c}_{u_t}(S^{(t+1)})>
	\epsilon\cdot \hat{c}_{u_t}(S^{(t)}).
\end{equation}
This follows since $s_{u_t}^*$ is essentially a $\frac{1}{1-\epsilon}$-move of player $u_t,$
and since $\Ph(\cdot)$ is a potential function of $\Ps$-game $\hat{\Gamma},$
see Theorem~\ref{th:potential function of the ps-game}.

Inequality~\eqref{eq:Potential_Decremental_Best_Response} together with inequality~\eqref{eq:Natural_Player_Cost_Lower_Bound} yield immediately that
\[
\Ph(S^{(t)})-\Ph(S^{(t+1)})\ge \epsilon\cdot a_{\min}^+>0
\]
for each iteration $t$ with $t<T_\epsilon(S^{(0)}).$
Moreover, we obtain
that $T_\epsilon(S^{(0)})\le \frac{\Ph(S^{(0)})-\Ph_{\min}}{\epsilon\cdot a_{\min}^+}$ for an arbitrary
$\epsilon\in (0,1)$ and an arbitrary initial profile $S^{(0)},$
where $\Ph_{\min}$ is  \emph{minimum} potential values.

This together with Lemma~\ref{le:lower-upper of potential} further imply that
$T_\epsilon\le \frac{N\cdot\hc_{\max}}{\epsilon\cdot a_{\min}^+}<\infty,$ where
$\hc_{\max}$ is short for
$\max_{S\in \prod_{v\in \N}\Sigma_v}\max_{u\in\N}\hc_u(S).$
Hence, Algorithm~\ref{alg:Best Response Dynamics} terminates within finite iterations.
We summarize this Theorem~\ref{theo:Direct_Running_Time_Best_Response_Dynamic} below.

\begin{theorem}\label{theo:Direct_Running_Time_Best_Response_Dynamic}
	Let $\Gamma$ be a weighted congestion game fulfilling Conditions~\ref{con:restrictions on weights}--\ref{con:restrictions on latency functions}, let $\hat{\Gamma}$ be its $\Ps$-game
	as in Definition~\ref{Ps-game}, let $S^{(0)}$ be an arbitrary initial profile, and let $\epsilon\in (0,1)$
	be an arbitrary constant.
	Then Algorithm~\ref{alg:Best Response Dynamics} computes a $\frac{1}{1-\epsilon}$-approximate
	pure Nash equilibrium of $\Ps$-game $\hat{\Gamma}$ within
	$T_{\epsilon}(S^{(0)})\le \frac{\Ph(S^{(0)})-\Ph_{\min}}{\epsilon\cdot a_{\min}^+}\le \frac{N\cdot\hc_{\max}}{\epsilon\cdot a_{\min}^+}$ iterations.
	Moreover, $T_\epsilon\le \frac{N\cdot\hc_{\max}}{\epsilon\cdot a_{\min}^+},$ and so Algorithm~\ref{alg:Best Response Dynamics} outputs a $\frac{d!}{1-\epsilon}$-approximate pure Nash equilibrium of
	$\Gamma$ within $\frac{N\cdot\hc_{\max}}{\epsilon\cdot a_{\min}^+}$ iterations.
\end{theorem}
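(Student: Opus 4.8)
The plan is to exploit the exact potential structure of the $\Ps$-game together with the uniform cost lower bound, both already established above, and reduce the claim to a telescoping count of the potential decrease. The argument naturally splits into a one-iteration invariant, a summation over all iterations, and a final transfer of the approximation guarantee back to $\Gamma$.

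First I would quantify the potential drop in a single iteration. Fix $t<T_\epsilon(S^{(0)})$. By construction the selected player $u_t\in\N_t^*$ performs a $\frac{1}{1-\epsilon}$-move, so $\frac{1}{1-\epsilon}\hat{c}_{u_t}(s_{u_t}^*,S_{-u_t}^{(t)})<\hat{c}_{u_t}(S^{(t)})$, which rearranges into $\hat{c}_{u_t}(S^{(t)})-\hat{c}_{u_t}(S^{(t+1)})>\epsilon\cdot\hat{c}_{u_t}(S^{(t)})$ since $S^{(t+1)}=(s_{u_t}^*,S_{-u_t}^{(t)})$. Because $\Ph$ is an \emph{exact} potential function of $\hat{\Gamma}$ (Theorem~\ref{th:potential function of the ps-game}) and only $u_t$ changes strategy from $S^{(t)}$ to $S^{(t+1)}$, the potential change equals the cost change of $u_t$, which gives inequality~\eqref{eq:Potential_Decremental_Best_Response}. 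Combining this with the uniform lower bound $\hat{c}_{u_t}(S^{(t)})\ge a_{\min}^+>0$ from~\eqref{eq:Natural_Player_Cost_Lower_Bound}, which itself rests on Conditions~\ref{con:restrictions on weights}--\ref{con:restrictions on latency functions}, yields the key per-step estimate $\Ph(S^{(t)})-\Ph(S^{(t+1)})\ge\epsilon\cdot a_{\min}^+$.

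Next I would sum this estimate. Since each iteration $t=0,1,\ldots,T_\epsilon(S^{(0)})-1$ decreases the potential by at least the fixed positive amount $\epsilon\cdot a_{\min}^+$, and since $\Ph$ never drops below $\Ph_{\min}$, telescoping gives $\Ph(S^{(0)})-\Ph_{\min}\ge T_\epsilon(S^{(0)})\cdot\epsilon\cdot a_{\min}^+$, i.e. $T_\epsilon(S^{(0)})\le\frac{\Ph(S^{(0)})-\Ph_{\min}}{\epsilon\cdot a_{\min}^+}$. To pass to the uniform bound I would invoke Lemma~\ref{le:lower-upper of potential}, which gives $\Ph(S^{(0)})\le\hat{C}(S^{(0)})\le N\cdot\hc_{\max}$ and $\Ph_{\min}\ge 1\ge 0$; taking the maximum over initial profiles then produces $T_\epsilon\le\frac{N\cdot\hc_{\max}}{\epsilon\cdot a_{\min}^+}$. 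Finiteness of this quantity (guaranteed by $a_{\min}^+>0$ and $\hc_{\max}<\infty$) shows the while-loop halts, so the returned profile is by definition a $\frac{1}{1-\epsilon}$-approximate pure Nash equilibrium of $\hat{\Gamma}$; the closing assertion then follows immediately from Lemma~\ref{le:the relation of equilibrium}, which upgrades any $\frac{1}{1-\epsilon}$-approximate equilibrium of $\hat{\Gamma}$ into a $\frac{d!}{1-\epsilon}$-approximate equilibrium of $\Gamma$.

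The genuinely load-bearing step is the conversion from ``the potential strictly decreases'' to ``the potential decreases by a fixed constant'': mere positivity of the drop would only yield termination, not an explicit bound. The constant-gap feature is exactly what the uniform lower bound $a_{\min}^+$ supplies, and it is here that Conditions~\ref{con:restrictions on weights}--\ref{con:restrictions on latency functions} (weights at least $1$ and each latency function carrying a positive coefficient) are indispensable, so I would be careful to flag this dependence rather than treat the lower bound as automatic. Everything else — exactness of $\Ph$, the potential bounds of Lemma~\ref{le:lower-upper of potential}, and the equilibrium transfer of Lemma~\ref{le:the relation of equilibrium} — is imported wholesale, so no additional estimation is required.
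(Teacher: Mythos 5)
Your proof is correct and follows essentially the same route as the paper's own argument: the exact-potential identity plus the $\frac{1}{1-\epsilon}$-move condition give the per-iteration drop $\Ph(S^{(t)})-\Ph(S^{(t+1)})\ge\epsilon\cdot a_{\min}^+$, telescoping and Lemma~\ref{le:lower-upper of potential} give the stated bounds on $T_\epsilon(S^{(0)})$ and $T_\epsilon$, and Lemma~\ref{le:the relation of equilibrium} transfers the guarantee back to $\Gamma$. No gaps.
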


While Theorem~\ref{theo:Direct_Running_Time_Best_Response_Dynamic}
shows a finite upper bound $\frac{N\cdot\hc_{\max}}{\epsilon\cdot a_{\min}^+}$ of $T_\epsilon,$
it may overestimate $T_\epsilon$, as it is based on a very crude estimation of
player's cost in inequality \eqref{eq:Natural_Player_Cost_Lower_Bound}.
To obtain a tighter upper bound, we implement a finer analysis below.

With Lemma~\ref{le:lower-upper of potential}, we obtain that
\begin{equation}\label{eq:Potential_Upper_Bound}
\hc_{m_t}(S^{(t)})=\max_{u\in\N}\ \hat{c}_u(S)\ge \frac{1}{N}\cdot\hat{C}(S)\ge \frac{1}{N}\cdot \Ph(S).
\end{equation}
Here, $m_t\in\N$ is a player with a maximum cost
$\hc_{m_t}(S^{(t)})=\max_{u\in\N}\hc_u(S^{(t)}).$
When $\hc_{u_t}(S^{(t)})=\hc_{m_t}(S^{(t)})$ in an iteration $t<T_\epsilon(S^{(0)}),$ then
inequalities~\eqref{eq:Potential_Upper_Bound} and \eqref{eq:Potential_Decremental_Best_Response}
together yield that
\begin{equation*}
\Ph(S^{(t)})-\Ph(S^{(t+1)})\ge \frac{\epsilon}{N}\cdot \Ph(S^{(t)}),
\end{equation*}
and the potential function value decreases at a constant ratio of
at least $\frac{\epsilon}{N}$ in this case. This would yield a tighter upper bound
$O(\frac{N}{\epsilon}\cdot \log (N\cdot\hc_{\max}))$ of $T_\epsilon$ by
a similar proof to that of Theorem~\ref{th:fast convergence} below.

However, in general,  the selected player $u_t$ may have a cost
$\hc_{u_t}(S^{(t)})<\hc_{m_t}(S^{(t)}).$ Then the above analysis does not apply.
Nevertheless, Lemma~\ref{le:The relationship between the cost of O/D pairs} below
shows a similar result that
\begin{equation}\label{eq:Potential_Decremental_Special_Best_Response}
	\Ph(S^{(t)})-\Ph(S^{(t+1)})\ge \frac{\epsilon}{N\cdot(W+d!\cdot W^{d+1})}\cdot \Ph(S^{(t)})
\end{equation}
for each $t< T_\epsilon(S^{(0)})$,
when all players share the same strategy set, i.e.,
$\Sigma_{u}=\Sigma_{v}$ for two arbitrary players
$u,v\in\N.$
We move its proof to
Appendix~\ref{app:Proof_Of_Lemma_relationship_between_cost_of_OD_pairs}.
\begin{lemma}\label{le:The relationship between the cost of O/D pairs}
	Consider an arbitrary weighted congestion game $\Gamma$
	fulfilling Conditions~\ref{con:restrictions on weights}--\ref{con:restrictions on latency functions},
	and its $\Ps$-game $\hat{\Gamma}.$
	If $\Sigma_{u}=\Sigma_{u'}$ for all $u,u'\in \N,$ then
	\bea\label{eq:Cost_Reduction_lowe_bound_dynamic_best_response}
	\hc_{u_t}(S^{(t)})-\hc_{u_t}(S^{(t+1)})\ge\frac{\epsilon}{W+d!\cdot W^{d+1}}\cdot \hc_{u}(S^{(t)})
	\eea
	for every player $u\in\N$ and for each $t<T_\epsilon(S^{(0)})$, where $W$ is the common upper bound of players' weights.
Moreover, inequalities \eqref{eq:Potential_Upper_Bound} and \eqref{eq:Cost_Reduction_lowe_bound_dynamic_best_response} together yield
	\[
	\Ph(S^{(t)})-\Ph(S^{(t+1)})\ge \frac{\epsilon}{N\cdot(W+d!\cdot W^{d+1})}\cdot \Ph(S^{(t)})
	\]
	for each $t<T_\epsilon(S^{(0)})$ and for each initial profile $S^{(0)}$.
\end{lemma}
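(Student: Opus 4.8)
The plan is to prove the per-player estimate \eqref{eq:Cost_Reduction_lowe_bound_dynamic_best_response} for every $u\in\N$ first, and then deduce the potential bound by specialising it to a most expensive player. The engine is inequality~\eqref{eq:Potential_Decremental_Best_Response}: since $s^*_{u_t}$ is a $\tfrac{1}{1-\epsilon}$-move, we already have $\hc_{u_t}(S^{(t)})-\hc_{u_t}(S^{(t+1)})>\epsilon\cdot\hc_{u_t}(S^{(t)})$. Hence it suffices to show $\hc_u(S^{(t)})\le\left(W+d!\cdot W^{d+1}\right)\cdot\hc_{u_t}(S^{(t)})$ for every $u$, because then $\hc_{u_t}(S^{(t)})-\hc_{u_t}(S^{(t+1)})>\epsilon\cdot\hc_{u_t}(S^{(t)})\ge\frac{\epsilon}{W+d!\cdot W^{d+1}}\cdot\hc_u(S^{(t)})$, which is exactly \eqref{eq:Cost_Reduction_lowe_bound_dynamic_best_response}.

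I would split the proof of this cost comparison into two cases. If $u\in\N_t^*$, no symmetry is needed: the maximality condition~\eqref{eq:Choice_of_Player_in_iteration_t} gives $\hc_{u_t}(S^{(t)})-\hc_{u_t}(S^{(t+1)})\ge\hc_u(S^{(t)})-\hc_u(s^*_u,S^{(t)}_{-u})$, while $u\in\N_t^*$ means by \eqref{eq:def_epsilon_move} that the right-hand side exceeds $\epsilon\cdot\hc_u(S^{(t)})$; thus \eqref{eq:Cost_Reduction_lowe_bound_dynamic_best_response} already holds here with the even stronger constant $\epsilon$, and crucially $W+d!\cdot W^{d+1}\ge 1$.

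The interesting case, and the main obstacle, is $u\notin\N_t^*$, where the hypothesis $\Sigma_u=\Sigma_{u_t}$ is used crucially. Since player $u$ may deviate to $u_t$'s current strategy $s_{u_t}^{(t)}$ and has no $\tfrac{1}{1-\epsilon}$-move, I would start from $\hc_u(S^{(t)})\le\frac{1}{1-\epsilon}\cdot\hc_u(s_{u_t}^{(t)},S^{(t)}_{-u})$ and bound this deviation cost against $\hc_{u_t}(S^{(t)})$. On each resource $e\in s_{u_t}^{(t)}$ the deviation adds at most $u$'s weight $w_u$ to the multiset $U_e(S^{(t)})$, so Lemma~\ref{le:the properties of the Ph-function}d lets me write $\hat c_e$ of the deviation as the unchanged term $\sum_{k}a_{e,k}\Ps_k(U_e(S^{(t)}))$ plus a marginal term $\sum_{k}a_{e,k}\cdot k\cdot w_u\cdot\Ps_{k-1}(U_e(S^{(t)})\cup\{w_u\})$. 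Summing the unchanged term over $e\in s_{u_t}^{(t)}$ and multiplying by $w_u$ reproduces exactly $\frac{w_u}{w_{u_t}}\hc_{u_t}(S^{(t)})\le W\cdot\hc_{u_t}(S^{(t)})$, which is the additive summand $W$; the marginal term, estimated with Lemma~\ref{le:the properties of the Ph-function}a (i.e. $\Ps_k(M)\le k!\,L(M)^k$ and $L(M)^k\le\Ps_k(M)$) together with the weight range $w_v\in[1,W]$, yields the summand $d!\cdot W^{d+1}$. The delicate point is precisely this marginal estimate: one must control $\Ps_{k-1}(U_e(S^{(t)})\cup\{w_u\})$ against $\Ps_k(U_e(S^{(t)}))$ uniformly for $1\le k\le d$ and over all loads whose elements lie in $[1,W]$, while also keeping the $\frac{1}{1-\epsilon}$ prefactor under control, so that the two contributions combine to the claimed constant $W+d!\cdot W^{d+1}$.

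For the final (``moreover'') assertion I would apply the per-player bound \eqref{eq:Cost_Reduction_lowe_bound_dynamic_best_response} to $u=m_t$, a player with $\hc_{m_t}(S^{(t)})=\max_{u\in\N}\hc_u(S^{(t)})$. Inequality~\eqref{eq:Potential_Upper_Bound} gives $\hc_{m_t}(S^{(t)})\ge\frac{1}{N}\Ph(S^{(t)})$, and because only $u_t$ changes its strategy from $S^{(t)}$ to $S^{(t+1)}$, the potential-function property in Theorem~\ref{th:potential function of the ps-game} gives $\Ph(S^{(t)})-\Ph(S^{(t+1)})=\hc_{u_t}(S^{(t)})-\hc_{u_t}(S^{(t+1)})$. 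Chaining these three facts yields $\Ph(S^{(t)})-\Ph(S^{(t+1)})\ge\frac{\epsilon}{N\cdot(W+d!\cdot W^{d+1})}\cdot\Ph(S^{(t)})$, completing the proof.
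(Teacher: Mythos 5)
Your overall architecture matches the paper's: the case $u\in\N_t^*$ handled via the maximality rule \eqref{eq:Choice_of_Player_in_iteration_t} together with the definition of $\N_t^*$, a symmetry-based cost comparison for $u\notin\N_t^*$, and the final chaining of \eqref{eq:Cost_Reduction_lowe_bound_dynamic_best_response} (applied to a maximum-cost player $m_t$) with \eqref{eq:Potential_Upper_Bound} and the potential identity. However, your Case-2 argument has a genuine gap, and it is exactly the point you flag as ``delicate''. You deviate player $u$ to $u_t$'s \emph{current} strategy $s_{u_t}^{(t)}$ and compare the deviation cost with $u_t$'s \emph{current} cost $\hc_{u_t}(S^{(t)})$. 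The no-move hypothesis gives $(1-\epsilon)\cdot\hc_u(S^{(t)})\le \hc_u(s_{u_t}^{(t)},S_{-u}^{(t)})$, and your multiset comparison gives $\hc_u(s_{u_t}^{(t)},S_{-u}^{(t)})\le (W+d!\cdot W^{d+1})\cdot\hc_{u_t}(S^{(t)})$; combining these yields only $\hc_u(S^{(t)})\le\frac{W+d!\cdot W^{d+1}}{1-\epsilon}\cdot\hc_{u_t}(S^{(t)})$, hence $\hc_{u_t}(S^{(t)})-\hc_{u_t}(S^{(t+1)})>\epsilon\cdot\hc_{u_t}(S^{(t)})\ge\frac{\epsilon\cdot(1-\epsilon)}{W+d!\cdot W^{d+1}}\cdot\hc_u(S^{(t)})$. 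The prefactor $\frac{1}{1-\epsilon}$ has nothing to cancel against in your setup, so you prove a bound that is weaker than \eqref{eq:Cost_Reduction_lowe_bound_dynamic_best_response} by the factor $1-\epsilon$ (enough for the same big-O runtime at constant $\epsilon$, but not the stated lemma). The paper's proof avoids this with a small but essential twist: deviate $u$ to $u_t$'s \emph{best response} $s_{u_t}^*$ and compare with $u_t$'s \emph{post-move} cost, obtaining $\hc_u(s_{u_t}^*,S_{-u}^{(t)})\le (W+d!\cdot W^{d+1})\cdot\hc_{u_t}(s_{u_t}^*,S_{-u_t}^{(t)})$; since $\hc_{u_t}(s_{u_t}^*,S_{-u_t}^{(t)})=\hc_{u_t}(S^{(t+1)})<(1-\epsilon)\cdot\hc_{u_t}(S^{(t)})$ and the no-move hypothesis gives $(1-\epsilon)\cdot\hc_u(S^{(t)})\le\hc_u(s_{u_t}^*,S_{-u}^{(t)})$, the factor $1-\epsilon$ now appears on \emph{both} sides and cancels, giving $\hc_u(S^{(t)})<(W+d!\cdot W^{d+1})\cdot\hc_{u_t}(S^{(t)})$ exactly, and then \eqref{eq:Cost_Reduction_lowe_bound_dynamic_best_response} follows.

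A secondary, quantitative issue: for the marginal term you invoke only Lemma~\ref{le:the properties of the Ph-function}a and the weight range. That route gives $\Ps_{k-1}(U_e(S^{(t)})\cup\{w_u\})\le (k-1)!\cdot\bigl(L(U_e(S^{(t)}))+W\bigr)^{k-1}\le (k-1)!\cdot(1+W)^{k-1}\cdot L(U_e(S^{(t)}))^{k}\le (k-1)!\cdot(1+W)^{k-1}\cdot \Ps_k(U_e(S^{(t)}))$ (using $L(U_e(S^{(t)}))\ge w_{u_t}\ge 1$), so the second summand comes out as $d!\cdot W^2\cdot(1+W)^{d-1}$, which exceeds the claimed $d!\cdot W^{d+1}$ by a factor up to $(1+1/W)^{d-1}\le 2^{d-1}$. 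To land on the stated constant you need a sharper comparison of the post-join $\Ps$-value with $\Ps_k(U_e(\cdot))$; the paper instead bounds $\Ps_k$ of the enlarged multiset directly by $(1+k!\cdot W^k)\cdot\Ps_k$ of the original one (via Lemma~\ref{le:the properties of the Ph-function}c), which multiplied by $w_u\le W$ gives $W+d!\cdot W^{d+1}$ in one stroke. So both halves of your Case-2 estimate need repair: the choice of comparison profile (the essential issue) and the marginal bound (a fixable constant-tracking issue).
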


Lemma~\ref{le:The relationship between the cost of O/D pairs}
implies that Algorithm~\ref{alg:Best Response Dynamics}
computes a $\frac{1}{1-\epsilon}$-approximate pure Nash equilibrium of the
$\Ps$-game within
$O(\frac{N\cdot(W+d!\cdot W^{d+1})}{\epsilon}\cdot
\log(N\cdot\hc_{\max}))$ iterations when $\Sigma_{u}=\Sigma_{u'}$ for
all $u,u'\in\N.$ We summarize this result in
Theorem~\ref{th:fast convergence} below.

\begin{theorem}[Particular case]\label{th:fast convergence}
	Consider a weighted congestion game $\Gamma$ fulfilling Conditions~\ref{con:restrictions on weights}--\ref{con:restrictions on latency functions}, and its $\Ps$-game $\hat{\Gamma}.$
	if $\Sigma_{u}=\Sigma_{u'}$ for all $u,u'\in\N,$ then
	Algorithm~\ref{alg:Best Response Dynamics} computes a $\frac{1}{1-\epsilon}$-approximate
	pure Nash equilibrium of $\hat{\Gamma}$ within
$O(\frac{N\cdot(W+d!\cdot W^{d+1})}{\epsilon}\cdot
\log(N\cdot\hc_{\max}))$
	iterations,
$N=|\N|$ is the number of players, $\epsilon\in (0,1)$ is a small
	constant, and $\hc_{\max}=\max_{S\in\prod_{v\in\N}\Sigma_{v}}\max_{u\in\N}\hc_u(S)$ is the player's maximum cost value.
\end{theorem}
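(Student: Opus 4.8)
The plan is to convert the per-iteration geometric decay of the potential function into a logarithmic bound on the iteration count by a standard compounding argument, since the essential work has already been packaged into Lemma~\ref{le:The relationship between the cost of O/D pairs}. In the symmetric case $\Sigma_u=\Sigma_{u'}$, that lemma supplies
\[
\Ph(S^{(t+1)})\le\left(1-\frac{\epsilon}{N\cdot(W+d!\cdot W^{d+1})}\right)\cdot\Ph(S^{(t)})
\]
for every $t<T_\epsilon(S^{(0)})$. Writing $\beta:=\frac{\epsilon}{N\cdot(W+d!\cdot W^{d+1})}$, I would first note $\beta\in(0,1)$ (since $\epsilon<1\le N\cdot(W+d!\cdot W^{d+1})$) and then compound this bound over $t=0,1,\ldots,T_\epsilon(S^{(0)})-1$ to obtain
\[
\Ph\bigl(S^{(T_\epsilon(S^{(0)}))}\bigr)\le(1-\beta)^{T_\epsilon(S^{(0)})}\cdot\Ph(S^{(0)}).
\]

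Next I would apply the lower bound $\Ph(S)\ge 1$ from Lemma~\ref{le:lower-upper of potential} to the terminal profile, giving $1\le(1-\beta)^{T_\epsilon(S^{(0)})}\cdot\Ph(S^{(0)})$. Taking logarithms and invoking the elementary inequality $-\ln(1-\beta)\ge\beta$ (valid for $\beta\in(0,1)$), I would rearrange to
\[
T_\epsilon(S^{(0)})\le\frac{\ln\Ph(S^{(0)})}{-\ln(1-\beta)}\le\frac{\ln\Ph(S^{(0)})}{\beta}=\frac{N\cdot(W+d!\cdot W^{d+1})}{\epsilon}\cdot\ln\Ph(S^{(0)}).
\]
To remove the dependence on the arbitrary starting profile and bound $T_\epsilon=\max_{S^{(0)}}T_\epsilon(S^{(0)})$, I would substitute the upper bound $\Ph(S^{(0)})\le N\cdot\hc_{\max}$ recorded in~\eqref{eq:hc_upper_bound}, yielding $T_\epsilon\le\frac{N\cdot(W+d!\cdot W^{d+1})}{\epsilon}\cdot\ln(N\cdot\hc_{\max})$, which is precisely the asserted $O\!\left(\frac{N\cdot(W+d!\cdot W^{d+1})}{\epsilon}\cdot\log(N\cdot\hc_{\max})\right)$ estimate.

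I do not expect a genuine obstacle in this argument: once Lemma~\ref{le:The relationship between the cost of O/D pairs} is granted, everything reduces to compounding a geometric contraction against the fixed lower bound $\Ph\ge 1$, and the only care needed is verifying $\beta<1$ so the logarithmic step is legitimate. The real engine sits in Lemma~\ref{le:The relationship between the cost of O/D pairs} (proved in Appendix~\ref{app:Proof_Of_Lemma_relationship_between_cost_of_OD_pairs}), where symmetry $\Sigma_u=\Sigma_{u'}$ is used to compare the cost of an arbitrary player with that of the selected best-response mover $u_t$, together with the $\Ps$-function estimates of Lemma~\ref{le:the properties of the Ph-function} to produce the factor $W+d!\cdot W^{d+1}$; that is where I would expect the technical difficulty to lie, rather than in the convergence accounting carried out here.
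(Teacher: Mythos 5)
Your proposal is correct and follows essentially the same route as the paper: both rest on Lemma~\ref{le:The relationship between the cost of O/D pairs} for the per-iteration geometric contraction of $\Ph(\cdot)$, compound it over iterations, and convert to a logarithmic iteration bound using the bounds $1\le\Ph(S)\le N\cdot\hc_{\max}$ from Lemma~\ref{le:lower-upper of potential}. The only cosmetic difference is that the paper phrases the last step via $(1-x)^{1/x}\le e^{-1}$ and the minimum potential value $\Ph_{\min}$, whereas you use the equivalent inequality $-\ln(1-\beta)\ge\beta$ against the lower bound $\Ph\ge1$ directly.
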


\begin{proof}
	Inequality~\eqref{eq:Potential_Decremental_Special_Best_Response} yields immediately
	that
\bea
	\Ph(S^{(t+1)})&\le& \left(1-\frac{\epsilon}{N\cdot(W+d!\cdot W^{d+1})}\right)\cdot \Ph(S^{(t)})\nn\\
	&\le& \left(1-\frac{\epsilon}{N\cdot(W+d!\cdot W^{d+1})}\right)^{t+1}\cdot \Ph(S^{(0)}).\nn
\eea
	for each $t<T_\epsilon(S^{(0)}).$
Note that $(1-x)^{1/x}\le(e^{-x})^{1/x}=\frac{1}{e}$ for $x\in(0,1)$.
Hence, Algorithm~\ref{alg:Best Response Dynamics} will terminate
 within at most
$\frac{N\cdot(W+d!\cdot W^{d+1})}{\epsilon}\cdot \log \frac{\Ph(S^{(0)})}{\Ph_{\min}}$
iterations, since, otherwise, the potential function value would decrease to a value below its minimum $\Ph_{\min}$. Lemma~\ref{le:lower-upper of potential} yields immediately that
$\frac{N\cdot(W+d!\cdot W^{d+1})}{\epsilon}\cdot \log \frac{\Ph(S^{(0)})}{\Ph_{\min}}\le\frac{N\cdot (W+d!\cdot W^{d+1})}{\epsilon}\cdot \log(N\cdot\hc_{\max})$ for each initially profile $S^{(0)}\in\prod_{u\in\N}\Sigma_u$.
\end{proof}

Theorem \ref{th:fast convergence} and Lemma~\ref{le:the relationship of a profile} together
imply that Algorithm~\ref{alg:Best Response Dynamics} produces a
$\frac{d!}{1-\epsilon}$-approximate pure Nash equilibrium of $\Gamma$ within
$O(\frac{N\cdot(W+d!\cdot W^{d+1})}{\epsilon}\cdot \log(N\cdot\hc_{\max}))$ iterations in this particular case. With inequality~\eqref{eq:hc_upper_bound}, we
can see that this runtime is polynomial in input size when
it is parameterized by $d$ and $W.$ Here, we note that
$\epsilon$ is an arbitrary small constant.

In the above,
inequality~\eqref{eq:Cost_Reduction_lowe_bound_dynamic_best_response} plays a crucial
role for bounding $T_\epsilon(S^{(0)})$.
While this need not hold in general, Lemma~\ref{le:The relationship between the cost of multi-O/D} below shows a similar
result when there are players $u,u'\in\N$ with $\Sigma_{u}\ne \Sigma_{u'}$.
We move its proof to Appendix~\ref{proof:The relationship between the cost of multi-O/D}.

\begin{lemma}\label{le:The relationship between the cost of multi-O/D}
	Consider an weighted congestion game $\Gamma$ fulfilling Conditions~\ref{con:restrictions on weights}--\ref{con:restrictions on latency functions}, and its $\Ps$-game $\hat{\Gamma}.$
	Then for each $t<T_\epsilon(S^{(0)}),$ we have
\bea\label{eq:The relationship between the cost of O/D pairs} \hc_{u_t}(S^{(t)})-\hc_{u_t}(s_{u_t}^{*},S^{(t)}_{-u_t})\ge\frac{\epsilon}{\mu}\cdot \hc_{u}(S^{(t)})
	\eea
for every player $u\in\N$ and a constant $\mu:=E\cdot A\cdot(1+d)!\cdot N^d\cdot W^{d+1}$,
	where $E=|\E|$ is the number of resources, $A=\max\{\frac{a_{e,k}}{a_{e',k'}}:
	\ e,e'\in\E,\text{ } k,k'=0,\ldots,d\text{ with }a_{e',k'}>0\}$
	is the maximum ratio between two positive coefficients, and
	$W$ is the common upper bound of players' weights.
\end{lemma}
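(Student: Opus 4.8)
The plan is to prove inequality~\eqref{eq:The relationship between the cost of O/D pairs} uniformly over all $u\in\N$, by sandwiching every player's cost between two absolute quantities so that no case distinction (on whether $u$ itself admits a profitable move) is required. The only structural fact I would use about $u_t$ is that $u_t\in\N_t^*$, i.e.\ that $s_{u_t}^*$ is a $\frac{1}{1-\epsilon}$-move; this already gives $\hc_{u_t}(S^{(t)})-\hc_{u_t}(s_{u_t}^*,S_{-u_t}^{(t)})>\epsilon\cdot\hc_{u_t}(S^{(t)})$, exactly as in \eqref{eq:Potential_Decremental_Best_Response}. Note that the maximizer property \eqref{eq:Choice_of_Player_in_iteration_t} of $u_t$ is \emph{not} needed here, in contrast to the symmetric setting of Lemma~\ref{le:The relationship between the cost of O/D pairs}, where symmetry is what lets one compare $u_t$ against the most expensive player and obtain the much tighter denominator. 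It therefore suffices to establish $\hc_{u_t}(S^{(t)})\ge\frac{1}{\mu}\cdot\hc_u(S^{(t)})$ for every $u\in\N$, after which the claim follows by chaining the two inequalities.

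First I would record a uniform lower bound on $u_t$'s cost: since every strategy is a nonempty subset of $\E$ and every weight is at least $1$, inequality~\eqref{eq:Natural_Player_Cost_Lower_Bound} already yields $\hc_{u_t}(S^{(t)})\ge a_{\min}^+$, where $a_{\min}^+$ is the minimum positive coefficient. Next I would record a uniform upper bound on the cost of an arbitrary player $u$. Bounding $\Ps_k(U_e(S^{(t)}))\le k!\cdot(N\cdot W)^k$ via Lemma~\ref{le:the properties of the Ph-function}a, using $w_u\le W$ and $|s_u|\le E$, one reaches exactly the estimate carried out in \eqref{eq:hc_upper_bound}; dividing that display by $N$ gives $\max_{u\in\N}\hc_u(S^{(t)})\le N^d\cdot W^{d+1}\cdot E\cdot(d+1)!\cdot\max_{e\in\E}\max_{k}a_{e,k}$. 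The one extra step is to replace the coefficient factor by $\max_{e,k}a_{e,k}\le A\cdot a_{\min}^+$, which is valid since Condition~\ref{con:restrictions on latency functions} bounds the ratio between any two positive coefficients by $A$; this produces $\hc_u(S^{(t)})\le N^d\cdot W^{d+1}\cdot E\cdot(d+1)!\cdot A\cdot a_{\min}^+=\mu\cdot a_{\min}^+$.

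Combining the two bounds gives $\hc_u(S^{(t)})\le\mu\cdot a_{\min}^+\le\mu\cdot\hc_{u_t}(S^{(t)})$, hence $\hc_{u_t}(S^{(t)})\ge\frac{1}{\mu}\cdot\hc_u(S^{(t)})$. Feeding this into the move inequality \eqref{eq:Potential_Decremental_Best_Response} yields
\[
\hc_{u_t}(S^{(t)})-\hc_{u_t}(s_{u_t}^*,S_{-u_t}^{(t)})>\epsilon\cdot\hc_{u_t}(S^{(t)})\ge\frac{\epsilon}{\mu}\cdot\hc_u(S^{(t)}),
\]
which is \eqref{eq:The relationship between the cost of O/D pairs}. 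Since the two cost bounds are essentially already assembled in Section~\ref{subsec:Ps_game}, the lemma is a short corollary and there is no genuinely hard step; the only delicate point is the constant-chasing in the upper bound, namely verifying that the slack factors combine to land \emph{exactly} on $\mu=E\cdot A\cdot(1+d)!\cdot N^d\cdot W^{d+1}$ (here $(d+1)!=d!\cdot(d+1)$ and the $(d+1)$ comes from the $d+1$ summands of $c_e(N\cdot W)$, each at most $N^d\cdot W^d\cdot\max_k a_{e,k}$ because $N\cdot W\ge1$), rather than on a needlessly larger bound. Everything else is a routine substitution of the estimates already established.
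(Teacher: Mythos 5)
Your proof is correct, and it takes a genuinely different route from the paper's. The paper argues by a two-case analysis: when $u$ itself has a $\frac{1}{1-\epsilon}$-move it invokes the maximizer property \eqref{eq:Choice_of_Player_in_iteration_t} of $u_t$; when $u$ has no such move it exploits that fact to bound $\hc_u(S^{(t)})$ by $\frac{1}{1-\epsilon}\cdot\hc_u(s'_u,S^{(t)}_{-u})$ and then compares the two deviation profiles resource by resource (using $\Ps_k\ge 1$ and the ratio bound $A$), landing on the same constant $\mu$. You instead sandwich \emph{every} player's cost between the absolute bounds $a_{\min}^+$ (inequality \eqref{eq:Natural_Player_Cost_Lower_Bound}) and $\mu\cdot a_{\min}^+$ (inequality \eqref{eq:hc_upper_bound} divided by $N$, plus $\max_{e,k}a_{e,k}\le A\cdot a_{\min}^+$), so that any two players' costs differ by a factor of at most $\mu$; after that you only need $u_t\in\N_t^*$. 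Your constant-chasing is right: $(d+1)!\,N^d W^{d+1} E A$ is exactly $\mu$. What your argument buys is brevity (no case distinction) and a strictly more general conclusion: the lemma holds for \emph{any} player chosen from $\N_t^*$, so the tie-breaking rule \eqref{eq:Choice_of_Player_in_iteration_t} is irrelevant to the general-case runtime bound, a fact the paper's proof obscures. What the paper's route buys is locality: its Case 1 yields the stronger bound $\epsilon\cdot\hc_u(S^{(t)})$ with no $\mu$ at all, and its Case 2 compares costs of the actual profiles rather than worst-case absolute bounds, which keeps the structure parallel to the symmetric-case Lemma~\ref{le:The relationship between the cost of O/D pairs}; but since both analyses are ultimately controlled by the same parameters $E,A,d,N,W$, nothing quantitative is lost in your version.
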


Lemma~\ref{le:The relationship between the cost of multi-O/D} yields immediately that
Algorithm~\ref{alg:Best Response Dynamics} computes a $\frac{1}{1-\epsilon}$-approximate
pure Nash equilibrium of $\Ps$-game within $O(\frac{ N \cdot\mu }{\epsilon}\cdot \log(N\cdot\hc_{\max}))$ iterations. We summarize this in
Theorem~\ref{th:a d!-pne of multi-O/D} below.

\begin{theorem}[General case]\label{th:a d!-pne of multi-O/D}
Consider an arbitrary weighted congestion game $\Gamma$ fulfilling Conditions~\ref{con:restrictions on weights}--\ref{con:restrictions on latency functions}, and its $\Ps$-game $\hat{\Gamma}$.
Let $\epsilon\in (0,1)$ be an arbitrary constant.
Then $T_\epsilon\le \frac{N\cdot\mu }{\epsilon}\cdot\log(N\cdot\hc_{\max}).$
where $\mu$ is a constant defined in Lemma~\ref{le:The relationship between the cost of multi-O/D}.
Moreover, Algorithm~\ref{alg:Best Response Dynamics} computes a
$\frac{d!}{1-\epsilon}$-approximate pure Nash equilibrium of
$\Gamma$ within
$O(\frac{N\cdot\mu}{\epsilon}\cdot\log(N\cdot\hc_{\max}))$ iterations.
\end{theorem}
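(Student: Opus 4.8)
The plan is to mimic closely the proof of Theorem~\ref{th:fast convergence}, simply replacing the factor $W+d!\cdot W^{d+1}$ there by the more general constant $\mu$ supplied by Lemma~\ref{le:The relationship between the cost of multi-O/D}. The genuine analytic work has already been isolated in that lemma (whose proof is deferred to the appendix), so what remains is a routine geometric-decrease argument on the potential $\Ph(\cdot)$ of the $\Ps$-game $\hat{\Gamma}$.

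First I would note that inequality~\eqref{eq:The relationship between the cost of O/D pairs} holds for \emph{every} player $u\in\N$, and so in particular for the maximum-cost player $m_t$ satisfying $\hc_{m_t}(S^{(t)})=\max_{u\in\N}\hc_u(S^{(t)})$. Since $S^{(t+1)}=(s^*_{u_t},S^{(t)}_{-u_t})$ differs from $S^{(t)}$ only in the strategy of player $u_t$, and since $\Ph(\cdot)$ is an \emph{exact} potential of $\hat{\Gamma}$ by Theorem~\ref{th:potential function of the ps-game}, the left-hand side of~\eqref{eq:The relationship between the cost of O/D pairs} equals $\Ph(S^{(t)})-\Ph(S^{(t+1)})$, exactly as in~\eqref{eq:Potential_Decremental_Best_Response}. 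Combining this identity with the bound $\hc_{m_t}(S^{(t)})\ge \frac{1}{N}\cdot\Ph(S^{(t)})$ of inequality~\eqref{eq:Potential_Upper_Bound} then yields
\[
\Ph(S^{(t)})-\Ph(S^{(t+1)})\ge \frac{\epsilon}{N\cdot\mu}\cdot\Ph(S^{(t)})
\]
for each $t<T_\epsilon(S^{(0)})$, whence $\Ph(S^{(t+1)})\le\bigl(1-\tfrac{\epsilon}{N\cdot\mu}\bigr)^{t+1}\cdot\Ph(S^{(0)})$ by induction on $t$.

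Next, applying $(1-x)^{1/x}\le e^{-1}$ for $x\in(0,1)$ with $x=\frac{\epsilon}{N\cdot\mu}$, I would argue that after $\frac{N\cdot\mu}{\epsilon}\cdot\log\frac{\Ph(S^{(0)})}{\Ph_{\min}}$ iterations the potential would drop strictly below its minimum $\Ph_{\min}$, which is impossible; hence $T_\epsilon(S^{(0)})\le\frac{N\cdot\mu}{\epsilon}\cdot\log\frac{\Ph(S^{(0)})}{\Ph_{\min}}$. Invoking Lemma~\ref{le:lower-upper of potential}, which gives $\Ph_{\min}\ge 1$ and $\Ph(S^{(0)})\le\hat{C}(S^{(0)})\le N\cdot\hc_{\max}$, bounds the right-hand side uniformly over all initial profiles by $\frac{N\cdot\mu}{\epsilon}\cdot\log(N\cdot\hc_{\max})$, establishing the claimed bound on $T_\epsilon$. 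Finally, Lemma~\ref{le:the relation of equilibrium} converts the $\frac{1}{1-\epsilon}$-approximate pure Nash equilibrium of $\hat{\Gamma}$ returned by Algorithm~\ref{alg:Best Response Dynamics} into a $\frac{d!}{1-\epsilon}$-approximate pure Nash equilibrium of $\Gamma$, yielding the second assertion. Since all the difficulty resides inside Lemma~\ref{le:The relationship between the cost of multi-O/D}, I do not anticipate a real obstacle here; the only point deserving care is the exactness step $\Ph(S^{(t)})-\Ph(S^{(t+1)})=\hc_{u_t}(S^{(t)})-\hc_{u_t}(s^*_{u_t},S^{(t)}_{-u_t})$, which relies on $\Ph(\cdot)$ being an exact potential and on only player $u_t$ deviating between $S^{(t)}$ and $S^{(t+1)}$, and this transfers verbatim from the proof of Theorem~\ref{th:fast convergence}.
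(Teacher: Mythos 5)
Your proposal is correct and follows essentially the same route as the paper: the paper likewise treats Theorem~\ref{th:a d!-pne of multi-O/D} as an immediate consequence of Lemma~\ref{le:The relationship between the cost of multi-O/D}, obtained by repeating the geometric-decrease argument of Theorem~\ref{th:fast convergence} with the constant $W+d!\cdot W^{d+1}$ replaced by $\mu$, combining the exact-potential identity $\Ph(S^{(t)})-\Ph(S^{(t+1)})=\hc_{u_t}(S^{(t)})-\hc_{u_t}(S^{(t+1)})$ with inequality~\eqref{eq:Potential_Upper_Bound} and Lemmas~\ref{le:lower-upper of potential} and~\ref{le:the relation of equilibrium}. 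Your write-up merely makes explicit the steps the paper leaves implicit, including the one point genuinely worth flagging, namely that the cost decrement of the deviating player $u_t$ equals the potential decrement.
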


While the runtime in Theorem~\ref{th:a d!-pne of multi-O/D}
	is larger than that in Theorem~\ref{th:fast convergence}, it is still polynomial parameterized
	by the constants $d$, $A$ and $W$. This means that Algorithm~\ref{alg:Best Response Dynamics}
computes a $\frac{d!}{1-\epsilon}$-approximate pure Nash equilibrium of an arbitrary weighted
congestion game fulfilling Conditions~\ref{con:restrictions on weights}--\ref{con:restrictions on latency functions} within a polynomial runtime regardless whether
$\Sigma_{u}=\Sigma_{u'}$ for all $u',u\in\N$ or not.
This generalizes the result of \citet{Chien2011} that shows a similar result only for
symmetric congestion games with $\alpha$ bounded jump latency functions.
In particular, Algorithm~\ref{alg:Best Response Dynamics} has a better approximation ratio of $\frac{d!}{1-\epsilon}$ than those in the recent seminal work of \citet{Caragiannis2015}, \citet{Feldotto2017}
and \citet{Giannakopoulos2022}, although its runtime might be longer.
Section~\ref{se:a approximate potential func} below will propose a refined best response dynamic,
which computes a ``more precise'' approximate pure Nash equilibrium even within a much shorter runtime.

\subsection{A refined $\frac{2\cdot W\cdot (d+1)}{(2\cdot W+d+1)\cdot (1-\epsilon)}$ best response dynamic}\label{se:a approximate potential func}

Algorithm~\ref{alg:Best Response Dynamics} implements a
$\frac{1}{1-\epsilon}$ best response dynamic on the $\Ps$-game of a weighted congestion game
$\Gamma$ fulfilling Conditions~\ref{con:restrictions on weights}--\ref{con:restrictions on latency functions}.
With the approximate potential function $\Phi(\cdot)$ in Section~\ref{subsec:Approximate_Potential_Function},
we now define a refined best response dynamic directly on $\Gamma,$
and show that the resulting algorithm efficiently computes a $\frac{2\cdot W\cdot (d+1)}{(2\cdot W+d+1)
\cdot (1-\epsilon)}$-approximate pure Nash equilibrium of $\Gamma$ with a shorter runtime than Algorithm~\ref{alg:Best Response Dynamics}.

\subsubsection{The algorithm}
\label{subsubsec:Refined_Best_Reponse_Dynamic}

Algorithm~\ref{alg:Best Response Dynamics of appro-PNE} below shows
the pseudo code of a refined
 $\frac{\rho}{1-\epsilon}$ best response dynamic of $\Gamma$ for
 $\rho=\frac{2\cdot W\cdot (d+1)}{2\cdot W+d+1}$
 and an arbitrary small constant $\epsilon\in(0,1).$
It shares same structures with Algorithm~\ref{alg:Best Response Dynamics}, except for
	the choice of the player $u_t$.
Algorithm~\ref{alg:Best Response Dynamics} picks a player $u_t=\arg\max_{u\in\N_t^*}
\{c_u(S^{(t)})-c_u(s_u^*,S_{-u}^{(t)})\},$ while
Algorithm~\ref{alg:Best Response Dynamics of appro-PNE} chooses
a player $u_t=\arg\max_{u\in\N_t^*}
\{c_u(S^{(t)})-\rho\cdot c_u(s_u^*,S_{-u}^{(t)})\}$. This also distinguishes
Algorithm~\ref{alg:Best Response Dynamics of appro-PNE} from these commonly used best response dynamics,
see, e.g., \citet{Nisan2007}. An advantage of letting such a player to update strategy is that the approximate potential function $\Phi(\cdot)$ (defined in Section~\ref{subsec:Approximate_Potential_Function}) then decreases very rapidly, since
\bea\label{eq:Approximate_Potential_Decrement_Lower_Upper_General} \Phi(S^{(t)})-\Phi(S^{(t+1)})&=&\Phi(S^{(t)})-\Phi(s_{u_t}^*,S_{-u_t}^{(t)})\ge c_{u_t}(S^{(t)})-\rho\cdot c_{u_t}(s_{u_t}^*,S_{-u_t}^{(t)})\nn\\
		&\ge& c_u(S^{(t)})-\rho\cdot c_u(s_u^*,S_{-u}^{(t)})\ge
		 \epsilon\cdot c_{u}(S^{(t)})
\eea
for all $u\in\N_t^*$ when $S^{(t)}$ is not a $\frac{\rho}{1-\epsilon}$-approximate pure Nash equilibrium. This will then facilitate the runtime analysis in Section~\ref{subsubsec:Runtime_Analysis_Refined_Best_Reponse_Dynamic}.

\begin{algorithm}[!h]
	 \caption{ A refined $\frac{\rho}{1-\epsilon}$ best response dynamic of $\Gamma$}
    \label{alg:Best Response Dynamics of appro-PNE}
    \begin{algorithmic}[1]
   	\REQUIRE A weighted congestion game $\Gamma$ fulfilling Conditions~\ref{con:restrictions on weights}--\ref{con:restrictions on latency functions},  $\rho=\frac{2\cdot W\cdot (d+1)}{2\cdot W+d+1}$, and an arbitrary small constant $\epsilon\in (0,1)$
   	\ENSURE A $\frac{\rho}{1-\epsilon}$-approximate pure Nash equilibrium of $\Gamma$
   	\STATE take an arbitrary initial profile $S^{(0)}$ and put $t=0$
   	\WHILE{$S^{(t)}$ is not a $\frac{\rho}{1-\epsilon}$-approximate pure Nash equilibrium}
   	\FOR{each $u\in\N$}
   	\STATE $s_u^*=\arg\min_{s_u\in \Sigma_{u}} c_u(s_u, S_{-u}^{(t)})$
   	\ENDFOR
   	\STATE $\N_t^*=\{u\in\N_t:\ u\text{ has }\frac{\rho}{1-\epsilon}\text{-moves}\}$, i.e.,
   	\[
   	\forall u\in\N: u\in\N_t^* \iff c_u(S^{(t)})>\frac{\rho}{1-\epsilon}\cdot c_u(s_u^*,S_{-u}^{(t)})
   	\]
   	\STATE $u_t:=\arg\max_{u\in\N_t^*} \{c_u(S^{(t)})-\rho\cdot c_u(s_u^*,S_{-u}^{(t)})\}$
   	\STATE $S^{(t+1)}=(s_{u_t}^*,S_{-u_t}^{(t)})$ and $t=t+1$
   	\ENDWHILE
   	\RETURN $S^{(t)}$
   \end{algorithmic}
\end{algorithm}
Similar to Algorithm~\ref{alg:Best Response Dynamics}, we assume that each iteration of
Algorithm~\ref{alg:Best Response Dynamics of appro-PNE} has a polynomial complexity.
Then the runtime (i.e., the total number of iterations Algorithm~\ref{alg:Best Response Dynamics of appro-PNE} takes for computing a $\frac{\rho}{1-\epsilon}$-approximate pure Nash equilibrium)
dominates the total computational complexity. To facilitate our discussion, we employ
$T_{\rho,\epsilon}(S^{(0)})$ to denote the runtime of Algorithm~\ref{alg:Best Response Dynamics of appro-PNE} when the initial profile is $S^{(0)}.$ Then
$T_{\rho,\epsilon}=\max_{S^{(0)}\in \prod_{u\in\N}\Sigma_u}T_{\rho,\epsilon}(S^{(0)})$
is the worst-case runtime of Algorithm~\ref{alg:Best Response Dynamics of appro-PNE}.

\subsubsection{Runtime analysis of Algorithm~\ref{alg:Best Response Dynamics of appro-PNE}}
\label{subsubsec:Runtime_Analysis_Refined_Best_Reponse_Dynamic}

Theorem~\ref{th:fast convergence of appro} below shows that
Algorithm~\ref{alg:Best Response Dynamics of appro-PNE} terminates
within $O(\frac{N\cdot (1+W)^{d+1}}{\epsilon}\cdot
\log (N\cdot c_{\max}))$
iterations, for the constant $\rho=\frac{2\cdot W\cdot (d+1)}{2\cdot W+d+1}$
and for each constant $\epsilon\in (0,1)$ when $\Sigma_{u}=\Sigma_{u'}$ for two arbitrary players
$u,u'\in\N.$ Here, $c_{\max}=\max_{u\in\N} \max_{S\in \prod_{v\in \N}\Sigma_v} c_u(S)$
is the maximum cost of a player in game $\Gamma$.

\begin{theorem}[Particular case]\label{th:fast convergence of appro}
	Consider a weighted congestion game $\Gamma$ fulfilling Conditions~\ref{con:restrictions on weights}--\ref{con:restrictions on latency functions},
	a constant $\rho=\frac{2\cdot W\cdot (d+1)}{2\cdot W+d+1},$
	a constant $\epsilon\in (0,1).$
	If $\Sigma_{u}=\Sigma_{u'}$ for two arbitrary players
	$u,u'\in\N,$ then Algorithm~\ref{alg:Best Response Dynamics of appro-PNE}
	computes a $\frac{\rho}{1-\epsilon}$-approximate pure Nash equilibrium of
	$\Gamma$ within
$O(\frac{N\cdot (1+W)^{d+1}}{\epsilon}\cdot
\log (N\cdot c_{\max}))$
iterations, where $c_{\max}=\max\{c_u(S):\forall S\in\prod_{u\in\N}\Sigma_u\text{ and }\forall u\in\N\}$ is the maximum cost of a player.
\end{theorem}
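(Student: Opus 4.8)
The plan is to mirror the proof of Theorem~\ref{th:fast convergence}: I would establish a per-iteration geometric decrease of the approximate potential function of the form $\Phi(S^{(t+1)})\le\bigl(1-\tfrac{\epsilon}{N\cdot(1+W)^{d+1}}\bigr)\cdot\Phi(S^{(t)})$ for every $t<T_{\rho,\epsilon}(S^{(0)})$, and then iterate it. Writing $m_t\in\N$ for a player of maximum cost $c_{m_t}(S^{(t)})=\max_{u\in\N}c_u(S^{(t)})$, inequality~\eqref{eq:Approximate_Potential_Decrement_Lower_Upper_General} already yields $\Phi(S^{(t)})-\Phi(S^{(t+1)})\ge\epsilon\cdot c_u(S^{(t)})$ for every $u\in\N_t^*$, and hence $\Phi(S^{(t)})-\Phi(S^{(t+1)})\ge\epsilon\cdot\max_{u\in\N_t^*}c_u(S^{(t)})$. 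Combining this with $\Phi(S^{(t)})\le C(S^{(t)})\le N\cdot c_{m_t}(S^{(t)})$ from Lemma~\ref{le:lower-upper approx potential}, the whole theorem reduces to the single key estimate
\begin{equation}
\max_{u\in\N_t^*}c_u(S^{(t)})\ge\frac{1}{(1+W)^{d+1}}\cdot c_{m_t}(S^{(t)}),
\end{equation}
which I would isolate as a lemma (the analogue of Lemma~\ref{le:The relationship between the cost of O/D pairs}) and which is the heart of the argument.

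Proving this estimate is where the symmetry assumption $\Sigma_u=\Sigma_{u'}$ enters, and it is the main obstacle. If $m_t\in\N_t^*$ the estimate is immediate. Otherwise $m_t$ has no $\frac{\rho}{1-\epsilon}$-move, so $c_{m_t}(S^{(t)})\le\frac{\rho}{1-\epsilon}\cdot c_{m_t}(s_{m_t}^*,S_{-m_t}^{(t)})$, while $\N_t^*\ne\emptyset$ supplies some $v\in\N_t^*$. Since the strategy sets coincide, $m_t$ may deviate to $v$'s best response $s_v^*$, whence $c_{m_t}(s_{m_t}^*,S_{-m_t}^{(t)})\le c_{m_t}(s_v^*,S_{-m_t}^{(t)})$. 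The plan is then to compare the two profiles $(s_v^*,S_{-m_t}^{(t)})$ and $(s_v^*,S_{-v}^{(t)})$ resource by resource: on each $e\in s_v^*$ the load lies in $[L(U_e(S^{(t)})),L(U_e(S^{(t)}))+W]$ in both profiles and is at least $1$, so on every such $e$ the load seen by $m_t$ is at most $(1+W)$ times the load seen by $v$. Together with monotonicity and the elementary bound $c_e(\lambda x)\le\lambda^d\cdot c_e(x)$ for $\lambda\ge1$, this gives $c_{m_t}(s_v^*,S_{-m_t}^{(t)})\le W\cdot(1+W)^d\cdot c_v(s_v^*,S_{-v}^{(t)})$, the extra factor $W$ arising from $w_{m_t}/w_v\le W$. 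Feeding in $c_v(s_v^*,S_{-v}^{(t)})<\frac{1-\epsilon}{\rho}\cdot c_v(S^{(t)})$ (because $v\in\N_t^*$) collapses the chain to $c_{m_t}(S^{(t)})\le W\cdot(1+W)^d\cdot c_v(S^{(t)})\le(1+W)^{d+1}\cdot c_v(S^{(t)})$, with $v\in\N_t^*$ thus providing the required lower bound for the maximum. The delicate points are to verify the load bounds on $s_v^*$ precisely, accounting for whether $m_t$ or $v$ already used each resource, and to confirm that the loads never drop below $1$ so that the inequality $c_e(\lambda x)\le\lambda^d c_e(x)$ applies uniformly.

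With the key estimate in hand the remainder is routine. Substituting it into $\Phi(S^{(t)})-\Phi(S^{(t+1)})\ge\epsilon\cdot\max_{u\in\N_t^*}c_u(S^{(t)})$ and using $\Phi(S^{(t)})\le N\cdot c_{m_t}(S^{(t)})$ gives the claimed contraction, so that $\Phi(S^{(t)})\le\bigl(1-\tfrac{\epsilon}{N(1+W)^{d+1}}\bigr)^t\Phi(S^{(0)})$. The inequality $(1-x)^{1/x}\le e^{-1}$ then shows that after more than $\frac{N\cdot(1+W)^{d+1}}{\epsilon}\cdot\log\frac{\Phi(S^{(0)})}{\Phi_{\min}}$ iterations the value of $\Phi$ would fall below its minimum $\Phi_{\min}>0$, a contradiction; hence the algorithm halts within that many iterations. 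Finally, bounding $\Phi(S^{(0)})\le N\cdot c_{\max}$ through Lemma~\ref{le:lower-upper approx potential} and inequality~\eqref{eq:c_upper_bound}, and absorbing $\log(1/\Phi_{\min})$ into the hidden constant, converts this into the stated $O\bigl(\frac{N\cdot(1+W)^{d+1}}{\epsilon}\cdot\log(N\cdot c_{\max})\bigr)$ bound.
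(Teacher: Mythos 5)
Your proposal is correct and follows essentially the same route as the paper: your ``key estimate'' is exactly the content of Lemma~\ref{le:The relationship between the cost of O/D pairs with appro} (proved in Appendix~\ref{app:The relationship between the cost of O/D pairs with appro}), where a player with no $\frac{\rho}{1-\epsilon}$-move deviates (via symmetry) to the best response of a player in $\N_t^*$ and the cost comparison yields the factor $(1+W)^{d+1}$ from a load shift of at most $W$ plus a weight ratio of at most $W$, after which the geometric decrease of $\Phi$ and the termination bound are obtained exactly as in Theorem~\ref{th:fast convergence}. The only cosmetic difference is that you compare the maximum-cost player $m_t$ against an arbitrary $v\in\N_t^*$, whereas the paper compares every player $u$ against the selected player $u_t$; both give the same contraction factor $\frac{\epsilon}{N\cdot(1+W)^{d+1}}$ per iteration.
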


The runtime in Theorem~\ref{th:fast convergence of appro} is polynomial parameterized by
	$d$ and $W,$ since $\log c_{\max}$ is essentially a polynomial in input size, see inequality \eqref{eq:c_upper_bound}. Hence, Algorithm~\ref{alg:Best Response Dynamics of appro-PNE}
efficiently computes a $\frac{\rho}{1-\epsilon}$-approximate pure Nash equilibrium when
$\Sigma_{u}=\Sigma_{u'}$ for all $u,u'\in\N.$
This improves the result of Theorem~\ref{th:fast convergence}.

We prove Theorem~\ref{th:fast convergence of appro} below with a similar argument
to that of Theorem~\ref{th:fast convergence}.

Similar to Lemma~\ref{le:The relationship between the cost of O/D pairs},
Lemma~\ref{le:The relationship between the cost of O/D pairs with appro}
below shows that $\Phi(S^{(t)})-\Phi(S^{(t+1)})$ is bounded from below by
$\frac{\epsilon}{(1+W)^{d+1}}\cdot \max_{u\in\N} c_u(S^{(t)})$ for an arbitrary
$t<T_{\rho,\epsilon}(S^{(0)})$ when $\Sigma_{u}=\Sigma_{u'}$ for
two arbitrary players $u,u'\in\N.$
This combining with Lemma~\ref{le:lower-upper approx potential}
imply that the approximate potential value decreases at a rate of
at least $\frac{\epsilon}{N\cdot (1+W)^{d+1}}$ in each iteration before Algorithm~\ref{alg:Best Response Dynamics of appro-PNE} terminates. Moreover, Theorem~\ref{th:fast convergence of appro}
follows immediately from an argument similar to that of
Theorem~\ref{th:fast convergence}.

\begin{lemma}\label{le:The relationship between the cost of O/D pairs with appro}
	Consider a weighted congestion game $\Gamma$ fulfilling Conditions~\ref{con:restrictions on weights}--\ref{con:restrictions on latency functions},
	a constant $\rho=\frac{2\cdot W\cdot (d+1)}{2\cdot W+d+1},$ and
	a constant $\epsilon\in (0,1).$
	If $\Sigma_{u}=\Sigma_{u'}$ for all $u,u'\in\N,$ then
	\[
	\Phi(S^{(t)})-\Phi(S^{(t+1)})\ge c_{u_t}(S^{(t)})-\rho\cdot c_{u_t}(S^{(t+1)})\ge \frac{\epsilon}{(1+W)^{d+1}}\cdot
	c_u(S^{(t)})
	\]
	for each player $u\in\N,$ and for each $t<T_{\rho,\epsilon}(S^{(0)}).$
\end{lemma}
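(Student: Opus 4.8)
The statement is a chain of two inequalities, and the plan is to prove them in turn.

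The first inequality is just the defining property of the approximate potential function. Recall from Section~\ref{subsec:Approximate_Potential_Function} that, by \citet{Christodoulou2019}, the function $\Phi(\cdot)$ in \eqref{eq:approximate_potential_function} is a $\rho$-approximate potential function of $\Gamma$ for $\rho=\frac{2\cdot W\cdot(d+1)}{2\cdot W+d+1}$. I would therefore apply Definition~\ref{def:approximate_potential} with the profile $S^{(t)}$, the deviating player $u_t$, and its best response $s_{u_t}^*$. Since $S^{(t+1)}=(s_{u_t}^*,S_{-u_t}^{(t)})$ and hence $c_{u_t}(S^{(t+1)})=c_{u_t}(s_{u_t}^*,S_{-u_t}^{(t)})$, this gives $\Phi(S^{(t)})-\Phi(S^{(t+1)})\ge c_{u_t}(S^{(t)})-\rho\cdot c_{u_t}(S^{(t+1)})$ at once.

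For the second inequality I would split the players into $\N_t^*$ and its complement. For $u\in\N_t^*$ the bound is already contained in \eqref{eq:Approximate_Potential_Decrement_Lower_Upper_General}: the choice $u_t=\arg\max_{u\in\N_t^*}\{c_u(S^{(t)})-\rho\cdot c_u(s_u^*,S_{-u}^{(t)})\}$ combined with the $\frac{\rho}{1-\epsilon}$-move condition $c_u(S^{(t)})>\frac{\rho}{1-\epsilon}\cdot c_u(s_u^*,S_{-u}^{(t)})$ yields $c_{u_t}(S^{(t)})-\rho\cdot c_{u_t}(S^{(t+1)})\ge c_u(S^{(t)})-\rho\cdot c_u(s_u^*,S_{-u}^{(t)})>\epsilon\cdot c_u(S^{(t)})$, which already exceeds the target since $(1+W)^{d+1}\ge 1$. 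The content of the lemma is thus to extend the estimate to a player $u\notin\N_t^*$, and this is precisely where the hypothesis $\Sigma_u=\Sigma_{u'}$ is used: because all players share one strategy set, any player may imitate any other player's current strategy. The key technical device I would isolate is an exchange estimate. If $v\in\N_t^*$ and player $u$ deviates to $v$'s current strategy $s_v^{(t)}$, then on each resource $e\in s_v^{(t)}$ we have $L(U_e(S^{(t)}))\ge w_v\ge 1$, so after $u$ joins the load is at most $L(U_e(S^{(t)}))+w_u\le(1+W)\cdot L(U_e(S^{(t)}))$; as $c_e(x)=\sum_{k=0}^d a_{e,k}\cdot x^k$ has nonnegative coefficients and degree at most $d$, this inflates $c_e$ by a factor at most $(1+W)^d$, and together with the weight ratio $w_u/w_v\le W$ it gives $c_u(s_v^{(t)},S_{-u}^{(t)})\le W\cdot(1+W)^d\cdot c_v(S^{(t)})\le(1+W)^{d+1}\cdot c_v(S^{(t)})$. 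This is the source of the factor $(1+W)^{d+1}$.

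The main obstacle is to link a stuck player $u\notin\N_t^*$ — in particular the maximum-cost player, which is the only case needed in the runtime analysis — to a player $v\in\N_t^*$ that genuinely moves. The plan is to use that $u$'s best response is no cheaper than its deviation to $s_v^{(t)}$, so that $c_u(s_u^*,S_{-u}^{(t)})\le(1+W)^{d+1}\cdot c_v(S^{(t)})$ by the exchange estimate, and then to feed this into the failure of the move condition for $u$ (namely $c_u(S^{(t)})\le\frac{\rho}{1-\epsilon}\cdot c_u(s_u^*,S_{-u}^{(t)})$) together with the move inequality $c_{u_t}(S^{(t)})-\rho\cdot c_{u_t}(S^{(t+1)})>\epsilon\cdot c_v(S^{(t)})$ coming from the choice of $u_t$. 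Making these constants combine into exactly $\frac{\epsilon}{(1+W)^{d+1}}$, rather than a weaker constant carrying spurious factors of $\rho$ and $1-\epsilon$, is the delicate part, and I would carry it out by mirroring the proof of Lemma~\ref{le:The relationship between the cost of O/D pairs}, the analogous statement for the $\Ps$-game.
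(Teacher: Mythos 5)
Your treatment of the first inequality (via Definition~\ref{def:approximate_potential}) and of players $u\in\N_t^*$ (via the argmax choice of $u_t$) is correct and matches the paper. The gap is in the case $u\notin\N_t^*$, which is the entire content of the lemma. Your chain routes through an intermediate mover $v\in\N_t^*$ and $v$'s \emph{current} strategy $s_v^{(t)}$: the failure of the move condition for $u$ gives $c_u(S^{(t)})\le\frac{\rho}{1-\epsilon}\cdot c_u(s_u^*,S_{-u}^{(t)})$, your exchange estimate gives $c_u(s_u^*,S_{-u}^{(t)})\le c_u(s_v^{(t)},S_{-u}^{(t)})\le(1+W)^{d+1}\cdot c_v(S^{(t)})$, and the choice of $u_t$ gives $c_{u_t}(S^{(t)})-\rho\cdot c_{u_t}(S^{(t+1)})>\epsilon\cdot c_v(S^{(t)})$. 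Chaining these yields only
\[
c_{u_t}(S^{(t)})-\rho\cdot c_{u_t}(S^{(t+1)})>\frac{\epsilon\cdot(1-\epsilon)}{\rho\cdot(1+W)^{d+1}}\cdot c_u(S^{(t)}),
\]
which carries exactly the spurious factor $\frac{1-\epsilon}{\rho}<1$ that you yourself flag, and so does not prove the stated bound (it would still give a polynomial runtime, but not the lemma as written). You defer the fix to ``mirroring Lemma~\ref{le:The relationship between the cost of O/D pairs},'' but that mirroring is not a refinement of your chain --- it replaces it.

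The missing idea is the choice of deviation strategy for the stuck player: $u$ must be compared against $s_{u_t}^*$ itself, the strategy that $u_t$ actually adopts, which is available to $u$ precisely because all strategy sets coincide. Then the no-move condition for $u$ reads $\rho\cdot c_u(s_{u_t}^*,S_{-u}^{(t)})\ge(1-\epsilon)\cdot c_u(S^{(t)})$; your exchange estimate, applied to $s_{u_t}^*$ instead of $s_v^{(t)}$ (using $L(U_e(s_{u_t}^*,S_{-u}^{(t)}))\le L(U_e(s_{u_t}^*,S_{-u_t}^{(t)}))+W$, the load lower bound $1$, and $w_u\le W\cdot w_{u_t}$), compares $u$'s cost directly with $u_t$'s post-move cost, $c_u(s_{u_t}^*,S_{-u}^{(t)})\le(1+W)^{d+1}\cdot c_{u_t}(s_{u_t}^*,S_{-u_t}^{(t)})$; and the move condition for $u_t$ reads $\rho\cdot c_{u_t}(s_{u_t}^*,S_{-u_t}^{(t)})<(1-\epsilon)\cdot c_{u_t}(S^{(t)})$. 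Chaining these three, both $\rho$ and $(1-\epsilon)$ appear once at each end and cancel exactly, giving $c_u(S^{(t)})<(1+W)^{d+1}\cdot c_{u_t}(S^{(t)})$ and hence $c_{u_t}(S^{(t)})-\rho\cdot c_{u_t}(S^{(t+1)})>\epsilon\cdot c_{u_t}(S^{(t)})>\frac{\epsilon}{(1+W)^{d+1}}\cdot c_u(S^{(t)})$, which is the paper's proof. The moral is that the same strategy $s_{u_t}^*$ must appear in both the $u$-side and the $u_t$-side inequalities; any detour through a third player's (or $u_t$'s) current strategy re-introduces $\rho$ and $1-\epsilon$ asymmetrically and loses the constant.
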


We move the proof of Lemma~\ref{le:The relationship between the cost of O/D pairs with appro} to Appendix~\ref{app:The relationship between the cost of O/D pairs with appro}.
Lemma~\ref{le:The relationship between the cost of O/D pairs with appro} plays a crucial role for bounding $T_{\rho,\epsilon}(S^{(0)})$ in the above particular case. Similarly, we can generalize it, see Lemma~\ref{le:The relationship between the cost of multi-O/D with appro} below.

\begin{lemma}\label{le:The relationship between the cost of multi-O/D with appro}
	Consider a weighted congestion game $\Gamma$ fulfilling Conditions~\ref{con:restrictions on weights}--\ref{con:restrictions on latency functions},
	a constant $\rho=\frac{2\cdot W\cdot (d+1)}{2\cdot W+d+1},$ and
	a constant $\epsilon\in (0,1).$
	Then for each $t<T_{\rho,\epsilon}(S^{(0)}),$ we have
		\[
	\Phi(S^{(t)})-\Phi(S^{(t+1)})\ge c_{u_t}(S^{(t)})-\rho\cdot c_{u_t}(S^{(t+1)})\ge \frac{\epsilon}{\varpi}\cdot \hc_{u}(S^{(t)})
	\]
	for every player $u\in\N$ and a constant $\varpi:=E\cdot A\cdot(1+d)\cdot N^{d}\cdot W^{d+1}$,
	where $E$ is the number of resources, $W$ is the common upper bound of players' weights, $A=\max\{\frac{a_{e,k}}{a_{e',k'}}:
	\ e,e'\in\E, \text{  } k,k'=0,\ldots,d\text{ with }a_{e',k'}>0\}$
	is the maximum ratio between two positive coefficients, and $N=|\N|$ is the number of players.
\end{lemma}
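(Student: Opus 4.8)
The plan is to prove the two displayed inequalities in turn, the left one being immediate and the right one carrying the real content. For the left inequality I would simply invoke the defining property of the $\rho$-approximate potential function. Recall from Definition~\ref{def:approximate_potential} and equation~\eqref{eq:approximate_potential_function} that $\Phi(\cdot)$ is a $\rho$-approximate potential function of $\Gamma$ for $\rho=\frac{2\cdot W\cdot(d+1)}{2\cdot W+d+1}$. Applying this property to the deviation of the selected player $u_t$ from $s_{u_t}^{(t)}$ to its best response $s_{u_t}^{*}$, and using that $S^{(t+1)}=(s_{u_t}^{*},S_{-u_t}^{(t)})$, yields at once
\[
\Phi(S^{(t)})-\Phi(S^{(t+1)})\ge c_{u_t}(S^{(t)})-\rho\cdot c_{u_t}(S^{(t+1)}).
\]

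For the right inequality I would deliberately avoid any use of the strategy-set structure (which is exactly what the sharper particular case, Lemma~\ref{le:The relationship between the cost of O/D pairs with appro}, exploits) and instead reduce everything to the single quantity $a_{\min}^{+}$, the minimum positive coefficient. First I would bound the improvement of $u_t$ from below: since $t<T_{\rho,\epsilon}(S^{(0)})$ the profile $S^{(t)}$ is not a $\frac{\rho}{1-\epsilon}$-approximate pure Nash equilibrium, so $\N_t^{*}\ne\emptyset$ and $u_t\in\N_t^{*}$; the membership condition $c_{u_t}(S^{(t)})>\frac{\rho}{1-\epsilon}\cdot c_{u_t}(S^{(t+1)})$ rearranges to $c_{u_t}(S^{(t)})-\rho\cdot c_{u_t}(S^{(t+1)})>\epsilon\cdot c_{u_t}(S^{(t)})$. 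Combining this with the elementary lower bound $c_{u_t}(S^{(t)})\ge a_{\min}^{+}$ (player $u_t$ uses a nonempty strategy, has weight at least $1$, and every resource it uses carries a positive coefficient term evaluated at a load at least $1$) gives $c_{u_t}(S^{(t)})-\rho\cdot c_{u_t}(S^{(t+1)})>\epsilon\cdot a_{\min}^{+}$.

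Next I would bound the cost of an arbitrary player $u$ from above, uniformly over all profiles. Writing $c_u(S^{(t)})=w_u\cdot\sum_{e\in s_u}c_e(L(U_e(S^{(t)})))$ and using $w_u\le W$, $|s_u|\le E$, the load estimate $L(U_e(S^{(t)}))\le N\cdot W$, and that each latency is a sum of at most $d+1$ monomials, I obtain $c_u(S^{(t)})\le W\cdot E\cdot(d+1)\cdot(N\cdot W)^{d}\cdot\max_{e,k}a_{e,k}$; the coefficient-ratio bound of Condition~\ref{con:restrictions on latency functions} then replaces $\max_{e,k}a_{e,k}$ by $A\cdot a_{\min}^{+}$, giving $c_u(S^{(t)})\le E\cdot A\cdot(1+d)\cdot N^{d}\cdot W^{d+1}\cdot a_{\min}^{+}=\varpi\cdot a_{\min}^{+}$. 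Chaining the two bounds produces $c_{u_t}(S^{(t)})-\rho\cdot c_{u_t}(S^{(t+1)})>\epsilon\cdot a_{\min}^{+}\ge\frac{\epsilon}{\varpi}\cdot c_u(S^{(t)})$, which is the asserted right inequality (the derivation naturally yields $c_u(S^{(t)})$ on the right, in agreement with Lemma~\ref{le:The relationship between the cost of O/D pairs with appro}).

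I expect no deep obstacle here; this is the ``crude but polynomial'' counterpart of the finer same-strategy-set analysis, and that is precisely why $\varpi$ is inflated by the factors $E$, $A$ and $N^{d}$ that the particular case manages to avoid. The only delicate point is the constant bookkeeping: I must keep the factors $W$ (from the weight), $E$ (from $|s_u|\le E$), $(d+1)$ (from the number of monomials), $N^{d}W^{d}$ (from the worst-case load), and $A$ (from the coefficient ratio) aligned so that their product is exactly $\varpi=E\cdot A\cdot(1+d)\cdot N^{d}\cdot W^{d+1}$. This mirrors the estimate behind inequality~\eqref{eq:c_upper_bound} and behind the $\Ps$-game constant $\mu$ of Lemma~\ref{le:The relationship between the cost of multi-O/D}, the sole difference being that the factor $(1+d)!$ is replaced by $(1+d)$ once the $\Ps$-functions are dropped in favour of the ordinary polynomial latencies.
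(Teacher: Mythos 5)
Your proof is correct, but it follows a genuinely different route from the paper's own argument (Appendix~\ref{app:The relationship between the cost of multi-O/D pairs with appro}). The paper mirrors the structure of the symmetric case: it fixes a player $u$ without a $\frac{\rho}{1-\epsilon}$-move, invokes the no-move inequality $\rho\cdot c_u(s'_u,S^{(t)}_{-u})\ge(1-\epsilon)\cdot c_u(S^{(t)})$, and then compares $c_u(s'_u,S^{(t)}_{-u})$ against $c_{u_t}(s^*_{u_t},S^{(t)}_{-u_t})$ resource by resource to conclude $c_u(S^{(t)})<\varpi\cdot c_{u_t}(S^{(t)})$ --- a \emph{relative} comparison between two players' costs, with a separate case (handled by reference to the $\Ps$-game proof) for players in $\N^*_t$ that uses the argmax choice of $u_t$. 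You instead prove two \emph{absolute} bounds anchored at the minimum positive coefficient $a_{\min}^+$: every player's cost in every profile is at most $\varpi\cdot a_{\min}^+$ (via $w_u\le W$, $|s_u|\le E$, load $\le N\cdot W$, and $\max_{e,k}a_{e,k}\le A\cdot a_{\min}^+$), while the improvement of $u_t$ exceeds $\epsilon\cdot c_{u_t}(S^{(t)})\ge\epsilon\cdot a_{\min}^+$, using only the membership $u_t\in\N_t^*$ --- never the maximality rule for $u_t$, never a case distinction, never the no-move inequality. Chaining the two bounds gives exactly $\frac{\epsilon}{\varpi}\cdot c_u(S^{(t)})$ uniformly over all $u$, which is what Theorem~\ref{th:fast convergence of appro for general case} needs, and the constant bookkeeping does land precisely on $\varpi=E\cdot A\cdot(1+d)\cdot N^d\cdot W^{d+1}$. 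What your approach buys: it is shorter, uniform over players, and makes transparent that in the asymmetric case $\varpi$ is really just a uniform-boundedness constant. What it cannot do: this absolute-anchoring technique would not recover the sharper particular-case constant $(1+W)^{d+1}$ of Lemma~\ref{le:The relationship between the cost of O/D pairs with appro}, where the relative comparison through the shared best-response strategy $s^*_{u_t}$ is essential; the paper keeps the same proof style across both lemmas for that reason.

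One caveat worth flagging: the lemma statement writes $\hc_u(S^{(t)})$ on the right-hand side, but both your derivation and the paper's own proof conclude with $c_u(S^{(t)})$ (no hat). As literally stated with the hat, the bound would require inflating $\varpi$ by a factor $d!$ (via Lemma~\ref{le:the relationship of a profile}, $\hc_u(S)\le d!\cdot c_u(S)$), i.e.\ essentially replacing $\varpi$ by the constant $\mu$ of Lemma~\ref{le:The relationship between the cost of multi-O/D}. You were right to read the hat as a typo and prove the unhatted version, which is also the form the runtime analysis actually uses.
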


We move its proof to Appendix~\ref{app:The relationship between the cost of multi-O/D pairs with appro}.

Lemma~\ref{le:The relationship between the cost of multi-O/D with appro} yields immediately that
Algorithm~\ref{alg:Best Response Dynamics of appro-PNE} computes a $\frac{\rho}{1-\epsilon}$-approximate
pure Nash equilibrium of the game $\Gamma$ within $O(\frac{N\cdot\varpi }{\epsilon}\cdot \log
(N\cdot c_{\max}))$ iterations for $\rho=\frac{2\cdot W\cdot (d+1)}{2\cdot W+d+1}$
and for each constant $\epsilon\in (0,1),$ where
$\varpi=E\cdot A\cdot(1+d)\cdot N^{d}\cdot W^{d+1}$. We summarize this in
Theorem~\ref{th:fast convergence of appro for general case} below.

\begin{theorem}[General case]\label{th:fast convergence of appro for general case}
	Consider a weighted congestion game $\Gamma$ fulfilling Conditions~\ref{con:restrictions on weights}--\ref{con:restrictions on latency functions},
	a constant $\rho=\frac{2\cdot W\cdot (d+1)}{2\cdot W+d+1}$,
	a constant $\epsilon\in (0,1).$
Then Algorithm~\ref{alg:Best Response Dynamics of appro-PNE}
	computes a $\frac{\rho}{1-\epsilon}$-approximate pure Nash equilibrium of
	$\Gamma$ within
$O(\frac{N\cdot\varpi}{\epsilon}
\log (N\cdot c_{\max}))$
iterations, where $c_{\max}$ is the maximum cost of players, and
$\varpi=E\cdot A\cdot(1+d)\cdot N^{d}\cdot W^{d+1}$.
\end{theorem}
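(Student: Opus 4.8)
The plan is to repeat the argument used for Theorem~\ref{th:fast convergence} almost verbatim, with the approximate potential function $\Phi(\cdot)$ of Section~\ref{subsec:Approximate_Potential_Function} playing the role that $\Ph(\cdot)$ played there, and with Lemma~\ref{le:The relationship between the cost of multi-O/D with appro} replacing the corresponding per-iteration decrease lemma. Since Lemma~\ref{le:The relationship between the cost of multi-O/D with appro} is already available, the theorem reduces to converting a fixed multiplicative decrease of $\Phi(\cdot)$ per iteration into a bound on the number $T_{\rho,\epsilon}(S^{(0)})$ of iterations.

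First I would fix an iteration $t<T_{\rho,\epsilon}(S^{(0)})$ and apply Lemma~\ref{le:The relationship between the cost of multi-O/D with appro}, which yields
\[
\Phi(S^{(t)})-\Phi(S^{(t+1)})\ge \frac{\epsilon}{\varpi}\cdot c_u(S^{(t)})\qquad \forall u\in\N
\]
(the stated $\hc_u$-form being only stronger, since $\hc_u\ge c_u$ by Lemma~\ref{le:the relationship of a profile}). Taking $u$ to be a player of maximum cost in $S^{(t)}$ and using $\max_{u\in\N}c_u(S^{(t)})\ge \frac{1}{N}\sum_{u\in\N}c_u(S^{(t)})=\frac{1}{N}\cdot C(S^{(t)})$ together with the upper bound $\Phi(S^{(t)})\le C(S^{(t)})$ from Lemma~\ref{le:lower-upper approx potential}, I obtain
\[
\Phi(S^{(t)})-\Phi(S^{(t+1)})\ge \frac{\epsilon}{N\cdot\varpi}\cdot\Phi(S^{(t)}),
\]
so that $\Phi(S^{(t+1)})\le \left(1-\frac{\epsilon}{N\cdot\varpi}\right)^{t+1}\cdot\Phi(S^{(0)})$ for every $t<T_{\rho,\epsilon}(S^{(0)})$. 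Using $(1-x)^{1/x}\le e^{-1}$ for $x\in(0,1)$ exactly as in Theorem~\ref{th:fast convergence}, the potential would fall to or below its global minimum $\Phi_{\min}$ after $\frac{N\cdot\varpi}{\epsilon}\cdot\log\frac{\Phi(S^{(0)})}{\Phi_{\min}}$ iterations; since $\Phi(\cdot)$ strictly decreases along the trajectory until termination, this bounds $T_{\rho,\epsilon}(S^{(0)})$. Finally I would bound the ratio by $\frac{\Phi(S^{(0)})}{\Phi_{\min}}\le N\cdot c_{\max}$, using $\Phi(S^{(0)})\le C(S^{(0)})\le N\cdot c_{\max}$ (Lemma~\ref{le:lower-upper approx potential} and inequality~\eqref{eq:c_upper_bound}), and then take the maximum over $S^{(0)}$ to conclude $T_{\rho,\epsilon}=O\!\left(\frac{N\cdot\varpi}{\epsilon}\log(N\cdot c_{\max})\right)$.

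I expect the only genuinely delicate step to be this last estimate, namely bounding $\frac{\Phi(S^{(0)})}{\Phi_{\min}}$ so cleanly by $N\cdot c_{\max}$: this requires a positive lower bound $\Phi_{\min}\ge 1$, which Lemma~\ref{le:lower-upper approx potential} does not supply, as it states only the upper bound $\Phi(S)\le C(S)$. To fill this gap I would argue that, because every strategy is a nonempty subset of $\E$ and every weight is at least $1$ by Condition~\ref{con:restrictions on weights}, at least one resource $e$ satisfies $L(U_e(S))\ge 1$; combined with $\sum_{k=0}^d a_{e,k}>0$ from Condition~\ref{con:restrictions on latency functions}, this forces $\Phi(S)$ to be bounded away from $0$ for every profile $S$, matching the lower bound of $1$ used for $\Ph(\cdot)$ in Lemma~\ref{le:lower-upper of potential}. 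Apart from securing this normalization, every step is a routine repetition of the proof of Theorem~\ref{th:fast convergence}, so no new idea beyond Lemma~\ref{le:The relationship between the cost of multi-O/D with appro} is needed.
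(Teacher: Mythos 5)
Your proposal is correct and follows essentially the same route as the paper: the paper obtains Theorem~\ref{th:fast convergence of appro for general case} precisely by combining the per-iteration decrease of Lemma~\ref{le:The relationship between the cost of multi-O/D with appro} with the upper bound of Lemma~\ref{le:lower-upper approx potential} and then repeating the geometric-decay argument from the proof of Theorem~\ref{th:fast convergence}. The normalization issue you flag (needing $\Phi_{\min}\ge 1$ to turn $\log\frac{\Phi(S^{(0)})}{\Phi_{\min}}$ into $\log(N\cdot c_{\max})$) is indeed left implicit in the paper---Lemma~\ref{le:lower-upper approx potential} states only the upper bound, and the paper silently uses for $\Phi$ the same lower bound it asserts without proof for $\Ph$ in Lemma~\ref{le:lower-upper of potential}---so your explicit treatment of that step is, if anything, more careful than the paper's own.
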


Theorem~\ref{th:fast convergence of appro for general case} means that
Algorithm~\ref{alg:Best Response Dynamics of appro-PNE} computes a
$\frac{\rho}{1-\epsilon}$-approximate pure Nash equilibrium in a polynomial runtime
for $\rho=\frac{2\cdot W\cdot (d+1)}{2\cdot W+d+1}$ and an arbitrary small constant
$\epsilon\in (0,1),$ when the weighted congestion game fulfills Conditions~\ref{con:restrictions on weights}--\ref{con:restrictions on latency functions}.
Hence, Algorithm~\ref{alg:Best Response Dynamics of appro-PNE} improves Algorithm~\ref{alg:Best Response Dynamics} from both the perspectives of accuracy and of efficiency.
 While the runtime of Algorithm~\ref{alg:Best Response Dynamics of appro-PNE} in Theorem~\ref{th:fast convergence of appro for general case} depends on more parameters than
these in the recent seminal work of \citet{Caragiannis2015,Feldotto2017} and \citet{Giannakopoulos2022},
it computes an approximate pure Nash equilibrium with a much better approximation ratio of $\frac{2\cdot W\cdot (d+1)}{(2\cdot W+d+1)\cdot (1-\epsilon)}.$

\section{Summary}\label{se:conclusion}

We consider the computation of approximate pure Nash equilibria
in weighted congestion games with polynomial latency functions. We
design two algorithms based on best response dynamics.
The first algorithm is driven by a $\frac{1}{1-\epsilon}$ best response dynamic on
the $\Ps$-game. This algorithm is similar to that of \citet{Chien2011}.
We prove that this algorithm computes a $\frac{d!}{1-\epsilon}$-approximate
pure Nash equilibrium in a polynomial runtime parameterized by the three
constants $d,$ $A$ and $W$. This generalizes the runtime result of \citet{Chien2011} that
is only  for
symmetric congestion games with $\alpha$ bounded jump latency functions.

The second algorithm is driven by incorporating the idea of approximate potential functions,
we propose a refined best response dynamic, see Algorithm~\ref{alg:Best Response Dynamics of appro-PNE}. This algorithm defines directly on the weighted congestion game, but not on its
$\Ps$-game. We prove that this algorithm computes a $\frac{2\cdot W\cdot (d+1)}{(2\cdot W+d+1)\cdot (1-\epsilon)}$-approximate pure Nash equilibrium also in a polynomial runtime.
While
this runtime is still parameterized by the three constants $d,$ $A$ and $W$,
it is much shorter than that of Algorithm~\ref{alg:Best Response Dynamics}.

In fact, our Algorithm~\ref{alg:Best Response Dynamics of appro-PNE} can also compute an approximate pure Nash equilibrium with a  better approximation ratio $\rho$ in a similar polynomial runtime when the corresponding $\rho$-approximate potential function exists.
This then raises an interesting question if there is a $\rho$-approximate
potential function $\Phi(\cdot)$ for weighted congestion games fulfilling Conditions~\ref{con:restrictions on weights}--\ref{con:restrictions on latency functions}
for some constant $\rho\in(1,\frac{2\cdot W\cdot (d+1)}{2\cdot W+d+1}).$

\section*{Acknowledgement}

The first author acknowledges support from the National Natural Science Foundation of China with grant No.~12131003.
The second author acknowledges support from the National Natural Science Foundation of China with grant No.~61906062, support from the Natural Science Foundation of Anhui with grant No.~1908085QF262, and support from the Talent Foundation
of Hefei University with grant No.~1819RC29.
The third author acknowledges support from the National Natural Science Foundation of China with grants No.~12131003 and No.~11871081.
The fourth author acknowledges support from the National Natural Science Foundation of China with grant No. 72192800.

\bibliographystyle{plainnat}
\bibliography{arXiv_RWXY_Computing_Equilibria}

\newpage
\begin{appendix}
	\section{Appendices: Mathematical Proofs}
	
\subsection{Proof of Lemma~\ref{le:The relationship between the cost of O/D pairs}}
	\label{app:Proof_Of_Lemma_relationship_between_cost_of_OD_pairs}

Consider now an arbitrary iteration $t\le T_{\epsilon}(S^{(0)})$. Let $u_t$ be the player chosen in iteration $t,$
	and let $u\in\N$ be an arbitrary player.

If $u$ coincides with $u_t$, then
		 \bea
\hc_{u_t}(S^{(t)})-\hc_{u_t}(S^{(t+1)})&=&
		 \hc_{u_t}(S^{(t)})-\hc_{u_t}(s_{u_t}^{*},S_{-u_t}^{(t)})
		 \ge\epsilon\cdot \hc_{u_t}(S^{(t)})\nn\\
&=&\epsilon\cdot \hc_{u}(S^{(t)})\ge\frac{\epsilon}{W+d!\cdot W^{d+1}}
		\cdot \hc_{u}(S^{(t)}).\nn
\eea
		This follows since $u=u_t$ has a $\frac{1}{1-\epsilon}$-move of $s_{u_t}^*,$ and since
		the upper bound $W$ in Condition~\ref{con:restrictions on weights} is
		not smaller than $1.$ Hence, Lemma~\ref{le:The relationship between the cost of O/D pairs} holds
		in this special case.
		
		Now we assume that $u\ne u_t,$ and distinguish two cases below.

\textbf{Case 1: $u_t\ne u$ and $u$ has $\frac{1}{1-\epsilon}$-moves}
		
		Since $u_t$ is chosen by Algorithm~\ref{alg:Best Response Dynamics}, we obtain that
		\[
		\hat{c}_{u_t}(S^{(t)})-\hat{c}_{u_t}(S^{(t+1)})
		\ge \hc_u(S^{(t)})-\hc_u(s'_u,S_{-u}^{(t)})
		\ge \epsilon\cdot \hc_u(S^{(t)})\ge \frac{\epsilon}{W+d!\cdot W^{d+1}}\cdot \hc_u(S^{(t)}),
		\]
		where $s'_u$ is a $\frac{1}{1-\epsilon}$-move of player $u$ w.r.t. $S_{-u}^{(t)}.$
		This follows since 
		$s_{u_t}^*$ is a best response of player $u_t$ w.r.t. $S_{-u_t}^{(t)},$ and since $u_t\in\N^*_t$ is a player that has a $\frac{1}{1-\epsilon}$-move with a maximum cost reduction w.r.t. these of the other players in $\N^*_t$.
		
		This completes the proof of Case 1.
		
		\textbf{Case 2: $u_t\ne u$ and $u$ does not have a $\frac{1}{1-\epsilon}$-move}
		
Note that
			\bea\label{big move}	\hc_{u_t}(S^{(t+1)})=\hc_{u_t}(s_{u_t}^*,S_{-u_t}^{(t)})<(1-\epsilon)\cdot \hc_{u_t}(S^{(t)}),
			\eea
			and that
			\bea\label{small move}
			\hc_{u}(s_{u_t}^*,S^{(t)}_{-u})\ge(1-\epsilon)\cdot \hc_u(S^{(t)}),
			\eea
			where $s_{u_t}^*$ is again a best-response of player $u_t$ w.r.t. $S_{-u_t}^{(t)}.$
			Inequalities~\eqref{big move}--\eqref{small move} follow
			since $s_{u_t}^*$ is a $\frac{1}{1-\epsilon}$-move of the selected player $u_t,$ since $u$ and $u_t$ have the same set of strategies
			(so $s_{u_t}^*$ is also a strategy of player $u$), and since
			$u$ does not have a $\frac{1}{1-\epsilon}$-move, and so moving to $s_{u_t}^*$
			can reduce the cost of player $u$ at a rate of at most $\epsilon$.

Note also that
\bea\label{eq:the relation of the cost of edge of same strategy}
\hc_{u}(s_{u_t}^{*},S^{(t)}_{-u})&=&w_u\cdot \sum_{e\in s_{u_t}^{*}}\sum_{k=0}^da_{e,k}\cdot \Ps_k(U_e(s_{u_t}^{*},S^{(t)}_{-u}))\nn\\
&\le&
w_u\cdot \sum_{e\in s_{u_t}^{*}}\sum_{k=0}^da_{e,k}\cdot (1+k!\cdot W^k)\cdot \Ps_k(U_e(s_{u_t}^{*},S^{(t)}_{-u_t}))\nn\\
&\le&
(W+d!\cdot W^{d+1})\cdot w_{u_t}\cdot \sum_{e\in s_{u_t}^{*}}\sum_{k=0}^da_{e,k}\cdot \Ps_k(U_e(s_{u_t}^{*},S^{(t)}_{-u_t}))\nn\\
&=&(W+d!\cdot W^{d+1})\cdot\hc_{u_t}(s_{u_t}^{*},S^{(t)}_{-u_t}).
\eea
Here, we used that $w_v\in [1,W]$ for all $v\in\N$, that
 $U_e(s_{u_t}^{*},S^{(t)}_{-u})=U_e(s_{u_t}^{*},S^{(t)}_{-u_t})\cup\{W\}$ in the worst case for
 each $e\in s_{u_t}^*$, 
and thus, that
\bea
\Ps_k(U_e(s_{u_t}^{*},S^{(t)}_{-u}))&\le &
\Ps_k(U_e(s_{u_t}^{*},S^{(t)}_{-u_t})\cup\{W\})\nn\\
&\le&[\Ps_k(U_e(s_{u_t}^{*},S^{(t)}_{-u_t}))^{1/k}+\Ps_k(\{W\})^{1/k}]^k\nn\\
&\le&[\Ps_k(U_e(s_{u_t}^{*},S^{(t)}_{-u_t}))^{1/k}\cdot(1+\Ps_k(\{W\}))^{1/k}]^k\nn\\
&=&\Ps_k(U_e(s_{u_t}^{*},S^{(t)}_{-u_t}))\cdot(1+\Ps_k(\{W\}))\nn\\
&=&(1+k!\cdot W^k)\cdot\Ps_k(U_e(s_{u_t}^{*},S^{(t)}_{-u_t}))\nn
\eea
for each $e\in s_{u_t}^*.$ The second inequality follows since Lemma~\ref{le:the properties of the Ph-function}c.

Inequalities~\eqref{big move}--\eqref{eq:the relation of the cost of edge of same strategy}
	together yield that
\bea
			(1-\epsilon)\cdot \hc_u(S^{(t)})&\le&
			\hc_{u}(s_{u_t}^{*},S_{-u}^{(t)})\le
			(W+d!\cdot W^{d+1})\cdot \hc_{u_t}(s_{u_t}^{*},S_{-u_t}^{(t)})\nn\\
&<& (W+d!\cdot W^{d+1})\cdot (1-\epsilon)
			\cdot \hc_{u_t}(S^{(t)}).\nn
\eea
			Hence,
			$$
			\hc_u(S^{(t)})< (W+d!\cdot W^{d+1})\cdot \hc_{u_t}(S^{(t)}).
			$$
			Moreover,
			$$
			\hc_{u_t}(S^{(t)})-\hc_{u_t}(S^{(t+1)})\ge\epsilon \cdot \hc_{u_t}(S^{(t)})>\frac{\epsilon}{W+d!\cdot W^{d+1}}\cdot \hc_u(S^{(t)}).
			$$
			This completes the proof of Case 2, and finishes the proof of Lemma~\ref{le:The relationship between the cost of O/D pairs}. $\hfill\square$

	\subsection{Proof of Lemma~\ref{le:The relationship between the cost of multi-O/D}}
	\label{proof:The relationship between the cost of multi-O/D}
Consider an arbitrary iteration $t<T_\epsilon(S^{(0)})$ for an arbitrary constant
	$\epsilon>0$ and an arbitrary initial profile $S^{(0)}.$
	Let $u\in\N$ be an arbitrary player. Similar to that of Lemma~\ref{le:The relationship between the cost of O/D pairs}, we distinguish two cases below.

\textbf{Case 1: player $u$ has  $\frac{1}{1-\epsilon}$-moves}
			
			Note that the selected player $u_t^*$ has a maximum cost reduction w.r.t. the other players
			in $\N_t^*$
			when he/she unilaterally moves to the best response strategy $s_{u_t}^*,$
			and that $u\in\N_t^*$ in this case since he/she has a $\frac{1}{1-\epsilon}$-move.
			Hence,
$$\hc_{u_t}(S^{(t)})-\hc_{u_t}(s_{u_t}^{*},S^{(t)}_{-u_t})\ge
			\hc_{u}(S^{(t)})-\hc_{u}(s_{u}^{*},S_{-u}^{(t)})>
			\epsilon\cdot \hc_{u}(S^{(t)})\ge\frac{\epsilon}{\mu}\cdot \hc_{u}(S^{(t)}).$$
			Here, note that the constant $\mu$ in Lemma~\ref{le:The relationship between the cost of multi-O/D} is not smaller than $1.$
			
			\textbf{Case 2: player $u$ does not have a $\frac{1}{1-\epsilon}$-move}

In this case, $u\notin \N_t^*.$ In particular, when
		player $u$ has the same strategy set with player $u_t,$ then
			an identical argument to that in proof of Lemma~\ref{le:The relationship between the cost of multi-O/D} applies. Hence, we assume, w.l.o.g., that
			$\Sigma_{u}\ne\Sigma_{u_t}$ for this case.

As player $u$ does not have a $\frac{1}{1-\epsilon}$-move, we obtain that
			\bea\label{small move of multi O/D}
			\hc_{u}(s'_{u},S^{(t)}_{-u})\ge (1-\epsilon)\cdot \hc_u(S^{(t)})
			\eea
			for an arbitrary strategy $s'_{u}\in\Sigma_u$.
			
			Similar to that in proof of Lemma~\ref{le:The relationship between the cost of O/D pairs},
			we now compare $\hat{c}_u(s'_{u},S_{-u}^{(t)})$ and
			$\hc_{u_t}(s_{u_t}^*,S_{-u_t}^{(t)})$ for an arbitrary strategy
			$s'_u\in\Sigma_{u}.$

Since there are totally $N$ players, and since each player has a weight smaller than $W,$
 we obtain for an arbitrary resource $e\in s'_{u}$ that
$$\Ph_k(U_e(s'_{u},S_{-u}^{(t)}))\le
	 k!\cdot L(U_e(s'_{u},S_{-u}^{(t)}))^k\le k!\cdot N^k\cdot W^k,\quad
	\forall k=0,1,\ldots,d.$$
This together with Lemma~\ref{le:the properties of the Ph-function}a yield for each
	$e\in s'_u$ and $e'\in s_{u_t}^*$ that
\begin{equation}\label{eq:the relation of the cost of edge}
	\begin{split}
		\hc_e(s'_{u},S_{-u}^{(t)})&=\sum_{k=0}^d
		a_{e,k}\cdot \Ps_{k}(U_e(s'_{u},S_{-u}^{(t)}))
		\le \sum_{k=0}^d
		a_{e,k}\cdot k!\cdot N^k\cdot W^k\le N^d\cdot W^d\cdot d!\cdot \sum_{k=0}^d a_{e,k}\\
		&\le N^d\cdot W^d\cdot (d+1)\cdot A\cdot d!\cdot \sum_{k=0}^d
		a_{e',k}\cdot \Ps_k(U_{e'}(s_{u_t}^*,S_{-u_t}^{(t)}))\\
		&=N^d\cdot W^d\cdot (d+1)!\cdot A\cdot \hc_{e'}(s_{u_t}^*,S_{-u_t}^{(t)}).
	\end{split}
\end{equation}
Here, we used that $\Ps_k(U_{e'}(s_{u_t}^*,S_{-u_t}^{(t)}))\ge 1$ for each
$k=0,1,\ldots,d,$ that
$a_{e,k'}\le A\cdot a_{e',k}$ for two arbitrary $k,k'\in \{0,1,\ldots,d\}$ with $a_{e',k}>0,$
and that there is at least one $k\in \{0,1,\ldots,d\}$ with
$a_{e',k}>0.$

Inequality~\eqref{eq:the relation of the cost of edge} implies immediately
			for an arbitrary strategy $s'_u\in \Sigma_{u}$
			 that
	\begin{equation}\label{eq:Asymmetric_Case_Important_Step}
				\begin{split}
					\hc_{u}(s'_u,S_{-u}^{(t)})\le
					E\cdot A\cdot(1+d)!\cdot N^d\cdot W^{d+1}\cdot \hc_{u_t}(s_{u_t}^*,S_{-u_t}^{(t)})
					=\mu\cdot \hc_{u_t}(s_{u_t}^*,S_{-u_t}^{(t)}).
				\end{split}
			\end{equation}
		Here, we note that $s_{u_t}^*$ may include only one resource, and
	$s'_u$ may contain $E=|\E|$ resources.
This, combined with inequality~\eqref{small move of multi O/D}, yield that
		\begin{equation*}
			\begin{split}
				(1-\epsilon)\cdot \hc_{u_t}(S^{(t)})>\hc_{u_t}(S^{(t+1)})=
				\hc_{u_t}(s_{u_t}^*,S_{-u_t}^{(t)})\ge \frac{1}{\mu}\cdot
				\hc_{u}(s'_u,S_{-u}^{(t)})\ge \frac{1-\epsilon}{\mu}\cdot
				\hc_{u}(S^{(t)})
			\end{split}
		\end{equation*}
	for an arbitrary $u\notin\N_t^*.$ Moreover, we obtain also that
	\begin{equation*}
		\hc_{u_t}(S^{(t)})-\hc_{u_t}(S^{(t+1)})>\epsilon\cdot \hc_{u_t}(S^{(t)})\ge\frac{\epsilon}{\mu}\cdot
		\hc_{u}(S^{(t)})
	\end{equation*}
when $u\notin \N_t^*.$

This completes the proof of Lemma~\ref{le:The relationship between the cost of multi-O/D}.
$\hfill\square$

\subsection{Proof of Lemma~\ref{le:The relationship between the cost of O/D pairs with appro}}
\label{app:The relationship between the cost of O/D pairs with appro}
This proof is similar to that in Appendix~\ref{app:Proof_Of_Lemma_relationship_between_cost_of_OD_pairs}. We thus only sketch the main steps below.

Let $u\in\N$ be an arbitrary player, and let $t<T_{\rho,\epsilon}(S^{(0)})$ be an arbitrary
iteration.
We assume w.l.o.g. that player $u$ does not have a
$\frac{\rho}{1-\epsilon}$-move, and so
\bea\label{eq:no appro move}
\rho\cdot  c_u(s_{u_t}^{*},S_{-u}^{(t)})\ge(1-\epsilon)\cdot c_u(S^{(t)}).
\eea
Here, we note that $s_{u_t}^*$ is also a strategy of player $u$ when
$\Sigma_{u}=\Sigma_{u_t}.$

Note that
\bea\label{eq:the relation of no appro move}
c_u(s_{u_t}^{*},S^{(t)}_{-u})&=&w_u\cdot\sum_{e\in s_{u_t}^{*}}c_e(s_{u_t}^{*},S_{-u}^{(t)})=
w_u\cdot \sum_{e\in s_{u_t}^{*}}\sum_{k=0}^da_{e,k}\cdot L(U_e(s_{u_t}^{*},S^{(t)}_{-u}))^k\nn\\
&\le& w_u\cdot \sum_{e\in s_{u_t}^{*}}\sum_{k=0}^da_{e,k}\cdot  [L(U_e(s_{u_t}^{*},S^{(t)}_{-u_t}))+W]^k\nn\\
&\le& w_u\cdot \sum_{e\in s_{u_t}^{*}}\sum_{k=0}^da_{e,k}\cdot (1+W)^k\cdot  L(U_e(s_{u_t}^{*},S^{(t)}_{-u_t}))^k\nn\\
&\le& (1+W)^{d+1}\cdot w_{u_t}\cdot \sum_{e\in s_{u_t}^{*}}\sum_{k=0}^da_{e,k}\cdot   L(U_e(s_{u_t}^{*},S^{(t)}_{-u_t}))^k\nn\\
&=&(1+W)^{d+1}\cdot c_{u_t}(s_{u_t}^{*},S_{-u_t}^{(t)}).\nn
\eea
Here, we used that $x+y\le x\cdot (y+1)$ when $x,y\ge 1$, and that
$L(U_e(s_{u_t}^*,S_{-u}^{(t)}))\le L(U_e(s_{u_t}^*,S_{-u_t}^{(t)}))+W$ when
$e\in s_{u_t}^*.$

This together with inequality ~\eqref{eq:no appro move} yield that
\begin{equation*}
(1-\epsilon)\cdot c_u(S^{(t)})\le\rho\cdot  c_u(s_{u_t}^{*},S_{-u}^{(t)})\le (1+W)^{d+1}\cdot \rho
\cdot c_{u_t}(s_{u_t}^*,S_{-u_t}^{(t)})<(1+W)^{d+1}\cdot (1-\epsilon) \cdot c_{u_t}(S^{(t)}),
\end{equation*}
which, in turn, implies that
\[
c_u(S^{(t)})<(1+W)^{d+1}\cdot c_{u_t}(S^{(t)}).
\]
Here, we used the fact that $s_{u_t}^*$ is a $\frac{\rho}{1-\epsilon}$-move of the selected player
$u_t.$
Moreover, we obtain that
\[
c_{u_t}(S^{(t)})-\rho \cdot c_{u_t}(s_{u_t}^{*},S_{-u_t}^{(t)})>\epsilon\cdot  c_{u_t}(S^{(t)})>\frac{\epsilon}{(1+W)^{d+1}}\cdot c_u(S^{(t)}).
\]

This completes the proof of Lemma~\ref{le:The relationship between the cost of O/D pairs with appro},
due to the arbitrary choice of the player $u\in \N.$
$\hfill\square$

\subsection{Proof of Lemma~\ref{le:The relationship between the cost of multi-O/D with appro}}
\label{app:The relationship between the cost of multi-O/D pairs with appro}
Let $u\in\N$ be an arbitrary player, and let $t<T_{\rho,\epsilon}(S^{(0)})$ be an arbitrary
iteration. Since this proof is similar to that in Appendix~\ref{proof:The relationship between the cost of multi-O/D}, we also sketch the main steps below.

We assume, w.l.o.g., that
			$\Sigma_{u}\ne\Sigma_{u_t}.$ As player $u$ does not have a $\frac{\rho}{1-\epsilon}$-move, we obtain that
			\bea\label{small move of multi O/D with appro}
			\rho\cdot c_{u}(s'_{u},S^{(t)}_{-u})\ge (1-\epsilon)\cdot c_u(S^{(t)})
			\eea
			for an arbitrary strategy $s'_{u}\in\Sigma_u$.

For each $s'_u\in\Sigma_{u},$
\bea\label{eq:the relation of the cost of edge with appro}
&&c_u(s'_{u},S^{(t)}_{-u})=w_u\cdot\sum_{e\in s'_{u}}c_e(s'_{u},S_{-u}^{(t)})=
w_u\cdot \sum_{e\in s'_{u}}\sum_{k=0}^da_{e,k}\cdot L(U_e(s'_{u},S^{(t)}_{-u}))^k\nn\\
&\le& w_u\cdot N^d\cdot W^d\cdot \sum_{e\in s'_u}\sum_{k=0}^d a_{e,k}
\le E\cdot A\cdot (1+d) \cdot N^d\cdot W^{d+1}\cdot w_{u_t} \cdot \sum_{e'\in s^*_{u_t}}
\sum_{k'=1}^d a_{e',k'}\nn\\
&\le& E\cdot A\cdot (1+d)\cdot N^d\cdot W^{d+1}\cdot c_{u_t}(s_{u_t}^*,S_{-u_t}^{(t)})=\varpi\cdot c_{u_t}(s_{u_t}^{*},S_{-u_t}^{(t)}),\nn
\eea
which together with inequality ~\eqref{eq:no appro move} yield that
\begin{equation*}
(1-\epsilon)\cdot c_u(S^{(t)})\le\rho\cdot  c_u(s'_{u},S_{-u}^{(t)})\le \varpi\cdot \rho
\cdot c_{u_t}(s_{u_t}^*,S_{-u_t}^{(t)})<\varpi\cdot (1-\epsilon) \cdot c_{u_t}(S^{(t)}),
\end{equation*}
which, in turn, implies that
\[
c_u(S^{(t)})<\varpi\cdot c_{u_t}(S^{(t)}).
\]
Moreover, we obtain that
\[
c_{u_t}(S^{(t)})-\rho\cdot c_{u_t}(s_{u_t}^{*},S_{-u_t}^{(t)})>\epsilon\cdot  c_{u_t}(S^{(t)})>\frac{\epsilon}{\varpi}\cdot c_u(S^{(t)}).
\]

This completes the proof of Lemma~\ref{le:The relationship between the cost of multi-O/D with appro},
due to the arbitrary choice of player $u$ from $\N.$
$\hfill\square$
\end{appendix}
\end{document}